\newcommand*{\email}[1]{\href{mailto:#1}{\nolinkurl{#1}}}
\newcommand{\allones}{\ensuremath{\mathbf{1}}}
\newcommand{\allzeros}{\ensuremath{\mathbf{0}}}
\newcommand{\norm}[1]{\ensuremath{\mathopen\lVert #1 \mathclose\rVert}}
\newcommand{\UnitBall}{\ensuremath{\mathrm{B}(\allzeros,\allones)}}
\newcommand{\Simplex}{\ensuremath{\Delta}}
\newcommand{\CornerSimplex}{\ensuremath{\Delta_\mathrm{c}}}
\newcommand{\ZZ}{\ensuremath{\mathbb{Z}}}
\newcommand{\QQ}{\ensuremath{\mathbb{Q}}}
\newcommand{\RR}{\ensuremath{\mathbb{R}}}
\newcommand{\NP}{\ensuremath{\mathrm{NP}}}
\newcommand{\PSPACE}{\ensuremath{\mathrm{PSPACE}}}
\newcommand{\cETR}{\ensuremath{\exists\RR}}
\newcommand{\PPAD}{\ensuremath{\mathrm{PPAD}}}
\newcommand{\FIXP}{\ensuremath{\mathrm{FIXP}}}
\newcommand{\ETR}{\textsc{ETR}}
\newcommand{\SqrtSum}{\textsc{SqrtSum}}
\newcommand{\PARTITION}{\textsc{Partition}}
\newcommand{\QUAD}{\textsc{Quad}}
\newcommand{\HOMQUAD}{\textsc{HomQuad}}
\newcommand{\Exp}{\operatorname*{E}}
\newcommand{\calG}{\ensuremath{\mathcal{G}}}
\newcommand{\calH}{\ensuremath{\mathcal{H}}}
\newcommand{\calS}{\ensuremath{\mathcal{S}}}
\newcommand{\Nout}{\ensuremath{\operatorname{N}^+}}
\newcommand{\Gvar}{\ensuremath{\calG_{\mathrm{var}}}}
\newcommand{\Gmul}{\ensuremath{\calG_{\mathrm{mul}}}}
\newcommand{\Gpoly}{\ensuremath{\calG_{\mathrm{poly}}}}
\newcommand{\GS}{\ensuremath{\calG(\calS)}}
\newcommand{\Gchance}{\ensuremath{\calG_{\mathrm{chance}}}}
\newcommand{\GexistsNE}{\ensuremath{\calG_{\exists \mathrm{NE}}}}
\newcommand{\GnoNE}{\ensuremath{\calG_{\mathrm{no NE}}}}
\newcommand{\Gsure}{\ensuremath{\calG_{\mathrm{sure}}}}
\newcommand{\Gpart}{\ensuremath{\calG_{\PARTITION}}}
\newcommand{\Reach}{\ensuremath{\mathsf{Reach}}}
\newcommand{\Safe}{\ensuremath{\mathsf{Safe}}}
\newtheorem{theorem}{Theorem}
\newtheorem{proposition}{Proposition}
\newtheorem{lemma}{Lemma}
\newtheorem{corollary}{Corollary}
\newtheorem{remark}{Remark}
\begin{document}

\title{\cETR-Completeness of Stationary Nash Equilibria in Perfect Information Stochastic Games}
\author{Kristoffer Arnsfelt Hansen}
\author{Steffan Christ S{\o}lvsten}
\affil{Aarhus University\\\email{{arnsfelt,soelvsten}@cs.au.dk}}
\maketitle

\begin{abstract}
  We show that the problem of deciding whether in a multi-player perfect
  information recursive game (i.e. a stochastic game with terminal rewards)
  there exists a stationary Nash equilibrium ensuring each player a certain
  payoff is \cETR-complete. Our result holds for acyclic games, where a Nash
  equilibrium may be computed efficiently by backward induction, and even for
  deterministic acyclic games with non-negative terminal rewards. We further
  extend our results to the existence of Nash equilibria where a single player
  is surely winning. Combining our result with known gadget games without any
  stationary Nash equilibrium, we obtain that for cyclic games, just deciding
  existence of any stationary Nash equilibrium is \cETR-complete. This holds
  for reach-a-set games, stay-in-a-set games, and for deterministic recursive
  games.
\end{abstract}

\section{Introduction}
The most common solution concept for noncooperative games is that of a Nash
equilibrium (NE), which was shown by Nash~\cite{AM:Nash51} to be guaranteed to
exist in finite games in strategic form. On the other hand, existence of a NE is
not guaranteed in more general models of games, and one must therefore settle
for weaker solutions. From a computational point of view this leads to the
natural problem of deciding whether a given game admits a NE. Likewise, if a NE
is guaranteed to exist this leads to the natural problem of computing a NE. In
case a NE exists it will generally not be unique, and some NE may be more
desirable than others. For instance, if comparing two different NE, \emph{all}
players may strictly prefer the first NE and we might consider the second NE
undesirable. From a computational point of view this leads to the natural
problem of deciding whether a given game admits a NE in which every player
receives payoff meeting a given \emph{payoff demand}. The computational
complexity of these three basic problems naturally depends heavily on the model
of games under consideration.

In the basic setting of finite games in strategic form, the computational
complexity of these problems is now well understood. The problem of computing a
NE was shown to be $\PPAD$-complete for 2-player games by Daskalakis, Goldberg,
and Papadimitriou~\cite{SICOMP:DaskalakisGP2009} and Chen and
Deng~\cite{FOCS:ChenDeng2006} and $\FIXP$-complete for $m$-player games, when $m
\geq 3$, by Etessami and Yannakakis~\cite{SICOMP:EtessamiY2010}. The problem of
deciding existence of a NE meeting given payoff demands was shown to be
\NP-complete for 2-player games by Gilboa and Zemel~\cite{GEB:GilboaZemel1989}
and \cETR-complete for $m$-player games, when $m \geq 3$, by
Garg~et~al.~\cite{TEAC:GargMVY2018}.

Littman~et~al.~\cite{UAI:LittmanRTZ2006} studied the arguably much simpler case
of two-player perfect information extensive form games, which we shall refer to
simply as \emph{tree games}. Here a NE is guaranteed to exist and may be
computed efficiently by backward induction~\cite{NeumannMorgenstern1947}. In
this way one may in fact always find a pure NE. On the other hand, players are
in general required to make probabilistic choices in order to ensure maximum
possible payoff. While Littman~et~al.\ devise an efficient algorithm for
computing the set of NE payoffs for \emph{deterministic games}, they show that
for two-player games with chance-nodes, it is \NP-hard to decide existence of a
NE meeting given payoff demands. One may for two-player games also prove
\NP-membership of this problem, thereby settling its complexity.

A more general setting where backward induction also show existence and
efficient computation of NE is that of perfect information games that are given
as a directed acyclic graph. We shall refer to these simply as \emph{acyclic
  games}. Here the strategies of the players may in general depend on past
history, but we shall here mainly be interested in the simple case when
strategies just depend on the current node of the graph, i.e.\ stationary
strategies.

Our main result is that for $m$-player perfect information acyclic games, $m
\geq 7$, it is \cETR-complete to decide existence of a stationary NE meeting
given payoff demands. This problem is thus presumably significantly harder for
acyclic games than for tree games. Recently several works have proved
\cETR-completeness for decision problems about NE in multiplayer games, but
these all concerns games in strategic
form~\cite{TOCS:SchaeferS2015,TEAC:GargMVY2018,STACS:BiloM2016,STACS:BiloM17,TCS:Hansen2019,SAGT:BerthelsenH19},
or the even more general models of extensive-form games with perfect recall but
imperfect information~\cite{TCS:Hansen2019} and extensive form games with
imperfect recall~\cite{AAMAS:GimbertPS2020}. In contrast, our results are the
first \cETR-completeness results for \emph{perfect information} games.

Acyclic games form a special case of perfect information recursive games, which
again form a special case of perfect information stochastic games. The
complexity of deciding existence of a NE meeting given payoff demands in
multiplayer stochastic games was first studied systematically by Ummels and
Wojtczak~\cite{LMCS:UmmelsW2011,CONCUR:UmmelsW2011}. Motivated by applications
to verification and synthesis of reactive systems, they study the cases of games
where players have $\omega$-regular objectives and of mean-payoff games, in
addition to the special case of recursive games. Ummels and Wojtczak show that
the problem of existence of a NE meeting given payoff
constraints\footnote{Ummels and Wojtczak consider having both lower bounds
  (i.e.\ demands) and upper bounds on payoffs. Their results however also holds
  with few changes assuming just payoff demands.} is \emph{undecidable} for
10-player recursive games with non-negative terminal rewards or for
deterministic 14-player recursive games. Since then,
Das~et~al.~\cite{TAMC:Das2015} improved this, by showing undecidability of
recursive games with non-negative terminal rewards with just 5 players. In the
more general setting of \emph{concurrent} games,
Bouyer~et~al.~\cite{LIPIcs:Boyer2014} even showed undecidability of the problem
of existence of a NE where a given player is surely winning for deterministic
concurrent 3-players games with reachability objectives.

In order to obtain decidability, Ummels and Wojtczak considered positional and
stationary NE. For existence of stationary NE meeting given payoff constraints,
they prove \NP-hardness for 2-player recursive games with non-negative terminal
rewards and for $n$-player deterministic recursive games (with $n$ being part of
the input), and they prove \SqrtSum-hardness for 4-player recursive games with
non-negative terminal rewards and for 8-player deterministic recursive games. On
the other hand, they show \PSPACE-membership of existence of a NE meeting given
payoff constraint for recursive games, games with common $\omega$-regular
objectives, and mean-payoff games. One may observe that their proofs in fact
give \cETR-membership (cf.\ Section~\ref{SEC:BorelMeanPayoff}).

From our initial \cETR-completeness result we show that deciding existence of a
stationary NE meeting given payoff demands is \cETR-complete also for
deterministic 13-player acyclic games with non-negative terminal rewards. To
prove this we make use of a modified version of a gadget constructed by Ummels
and Wojtczak~\cite{CONCUR:UmmelsW2011} to simulate chance nodes. To use this
modified gadget we rely on the fact, that we have proved \cETR-hardness for
acyclic games. In passing, we also observe that the chance node gadget can be
combined with the \NP-hardness result for tree games of
Littman~et~al.~\cite{UAI:LittmanRTZ2006} to give \NP-hardness for deterministic
tree games.

Combining our results for acyclic games with known gadget games without any
stationary NE, we obtain that for cyclic games, just deciding existence of any
stationary NE is \cETR-complete. This holds for reach-a-set games,
stay-in-a-set games, and for deterministic recursive games. Ummels previously
proved \NP-hardness and \SqrtSum-hardness for deciding existence of any
stationary NE in reach-a-set games~\cite[Corollary~4.9]{Ummels2011}.
The gadgets used for the last two constructions were only constructed recently
and to use them we again rely on the fact that we have proved \cETR-hardness
for acyclic games.

\section{Preliminaries}
For a finite set $S$, let $\Delta(S)$ denote the set of probability
distributions on $S$. Denote by $\Simplex^n \subseteq \RR^{n+1}$ the standard
$n$-simplex $\{x \in \RR^{n+1} \mid x\geq 0 \wedge \sum_{i=1}^{n+1} x_i = 1\}$.
We may then identify $\Simplex^n$ and $\Delta(\{1,\dots,n+1\})$ in the natural
way. Denote by $\CornerSimplex^n \subseteq \RR^n$ the standard corner
$n$-simplex $\{x \in \RR^n \mid x\geq 0 \wedge \sum_{i=1}^n x_i \leq 1\}$.

We next define the types of games, payoffs, and equilibria we consider in this
paper. Striving for a uniform exposition we modify common definitions in slight
and non-essential ways.

\subsection{Perfect Information Stochastic Games}
An $m$-player perfect information stochastic game $G$ is given by a directed
graph (digraph) $D=(V,A)$. For $u\in V$ denote by $\Nout(u) = \{v \in V \mid
(u,v)\in A\}$ the out-neigh\-bor\-hood of $u$. Let $T = \{u \in V \mid
\Nout(u)=\emptyset\}$ denote the set of sink nodes of $D$, also called the
\emph{terminals}. The non-terminal nodes are partitioned into disjoint sets $V
\setminus T = V_0 \cup V_1 \cup \dots \cup V_m$, where $V_0$ is the set of
\emph{chance nodes} and $V_i$ is the set of \emph{Player~$i$ nodes}, when $i\geq
1$. To each $v \in V_0$ is assigned a probability distribution $\pi_v \in
\Delta(\Nout(v))$. We say the game $G$ is \emph{deterministic} if
$V_0=\emptyset$.

We fix an initial node $u_0 \in V$ from which play proceeds in rounds. A history
of play is an infinite sequence $(u_k)_{k\geq 0}$ such that $(u_k,u_{k+1})\in A$
when $u_k \notin T$ and $u_{k+1}=u_k$ when $u_k \in T$. Let $\calH_\infty$
denote the set of all such histories. A finite history is a prefix of a history
of play. For $i\geq 0$ and $v \in V_i$, let $\calH_{i,v}$ denote the set of
finite histories $(u_k)_{k=0}^{K}$ ending in node $u_K=v$. For $i\geq 0$, let
$\calH_i = \cup_{v \in V_i} \calH_{i,v}$ denote the finite histories ending in a
node in $V_i$, and finally let $\calH = \cup_{i\geq 0} \calH_i$ denote the set
of all finite histories. If some prefix of a play is contained in $\calH_{i,v}$
for some $i$ and $v \in V_i$ we say that the play \emph{reaches}~$v$. A finite
history $h=(u_k)_{k=0}^K \in \calH$ defines a \emph{subgame} $G[h]$ of $G$ with
$u_K$ being the initial node of $G[h]$, play proceeding from $u_K$ in rounds
extending $h$.

\subsubsection{Strategies and Equilibria}
A strategy $\tau_i$ for Player~$i$ assigns to each $h \in \calH_{i,v}$ a
probability distribution $\tau_i(h) \in \Delta(\Nout(v))$, viewed as a function
$\Nout(v) \rightarrow [0,1]$. The strategy $\tau_i$ is \emph{stationary} if
$\tau_i(h)=\tau_i(h')$ for every $h,h' \in \calH_{i,v}$ and every $v \in V_i$,
i.e.\ when $\tau_i$ only depends on $v$. The strategy $\tau_i$ is \emph{pure} if
$\tau_i(h)$ is a single-point distribution for every $h \in \calH_i$. A
\emph{positional} strategy is a strategy that is simultaneously pure and
stationary.

A strategy profile $\tau=(\tau_1,\dots,\tau_m)$ consists of a strategy for each
player. The strategy profile is stationary, pure, or positional if all of its
strategies are stationary, pure, or positional, respectively. The set of plays
that extend a given finite history $h=(u_k)_{k=0}^K$ is called a cylinder set.
The total probability of these plays is given by the product $\prod_{k=0}^{K-1}
p_k (u_{k+1})$ where $p_k = \tau_i(u_0,\dots,u_{k})$ when $u_k\in V_i$ for some
$i \geq 1$ and where $p_k = \pi_{u_k}$ when $u_k \in V_0$. By Carathéodory's
extension theorem this defines a unique probability measure on the Borel
$\sigma$-algebra generated by the cylinders sets. Assume now that each
Player~$i$ is equipped with a bounded Borel measurable utility function $u_i:
\calH_\infty \rightarrow \RR$. Let $u: \calH_\infty \rightarrow \RR^m$ denote
the vector function of utilities $u(h)=(u_1(h), \dots, u_m(h))$. Given a
strategy profile $\tau$, the expected payoff $U_i(x)$ for Player~$i$ is given by
$U_i(\tau)=\Exp_\tau[u_i(h)]$. We let $U(\tau)=(U_1(\tau),\dots,U_m(\tau))$
denote the payoff profile of $\tau$.

Given a strategy profile $\tau$ we let
$\tau_{-i}=(\tau_1,\dots,\tau_{i-1},\tau_{i+1},\dots,\tau_m)$ denote the
strategy profile of all players except Player~$i$. Given a strategy $\tau'_i$
for Player~$i$, we let $(\tau_{-i};\tau'_i)$ denote the strategy profile
$(\tau_1,\dots,\tau_{i-1},\tau'_i,\tau_{i+1},\dots,\tau_m)$. We also denote
$(\tau_{-i};\tau'_i)$ by $\tau\setminus \tau'_i$. We say that $\tau'_i$ is a
\emph{best reply} for Player~$i$ to $\tau$ if $u_i(\tau \setminus \tau'_i) \geq
u_i(\tau \setminus \tau''_i)$ for all strategies $\tau''_i$ of Player~$i$. We
say that $\tau$ is a \emph{Nash equilibrium} (NE) if $\tau_i$ is a best reply to
$\tau$ for every Player~$i$.

Any finite history $h \in \calH$ induces a \emph{conditional strategy}
$\tau_i[h]$ in the subgame $G[h]$ from a strategy $\tau_i$ of Player~$i$. We say
that $\tau=(\tau_1,\dots,\tau_m)$ is a \emph{subgame perfect equilibrium} (SPE)
if the conditional strategy profile $\tau[h]=(\tau_1[h],\dots,\tau_m[m])$ is a
NE in $G(h)$, for every $h \in \calH$.

\subsubsection{Utility Functions}
We shall consider several different types of utility functions which in turn
gives rise to different classes of games. In a \emph{recursive
  game}~\cite{AMS:Everett1957} only plays that reach a terminal are assigned
non-zero utility. We may thus view the utility functions as functions $u_i : T
\rightarrow \RR$, also known as \emph{terminal rewards}. Recursive games where
all terminal payoffs are non-negative or non-positive are respectively called
non-negative recursive games and non-positive recursive games. If we normalize
the utility functions to take values in the range $[-1,1]$, every terminal
reward vector $u(v)$, for $v \in T$, can be written as a convex combination
$\sum_{i=1}^k \alpha_k p_k$ of vectors $p_k \in \{-1,0,1\}^m$. By replacing
terminal nodes with payoff $u(v)$ with an additional chance node going to a
terminal with payoff $p_k$ with probability $\alpha_k$, we transform a recursive
game into an equivalent recursive game with terminal reward vectors from the set
$\{-1,0,1\}^m$.

In a \emph{mean-payoff game}~\cite{AMS:Gillette1957,IJGT:EhrenfeuchtM1979},
Player~$i$ is given a reward function $r_i : V \rightarrow \RR$ and the utility
assigned to a play $h=(u_k)_{k\geq 0}$ is $u_i(h)=\liminf\limits_{K \rightarrow
  \infty} \tfrac{1}{K}\sum_{k=0}^{K-1} r_i(u_k)$, for all~$i$. Note that a
recursive game is a special case of a mean-payoff game, where all non-terminal
nodes are given reward $0$.

Utility functions that are indicator functions of Borel sets of plays are called
objectives. For convenience we simply identify the objective with its defining
set of plays. For $S \subseteq V$, the \emph{reachability} objective $\Reach(S)$
is the set of plays that reach a node in $S$ and the \emph{safety} objective
$\Safe(S)$ is the set of plays that only reach nodes in $S$. Games in which all
players have reachability objectives are called \emph{reach-a-set
  games}~\cite{CSL:ChatterjeeMJ2004} and games in which all players have safety
objectives are called \emph{stay-in-a-set games}~\cite{IJGT:SecchiS2002}. We say
that $\Reach(S)$ is a terminal reachability objective if $S \subseteq T$ and
similarly that $\Safe(S)$ is a terminal safety objective if $V \setminus T
\subseteq S$. Note that a reach-a-set game with terminal reachability objectives
is equivalent to a recursive game with terminal rewards from the set $\{0,1\}$.
Likewise, a stay-in-a-set game with terminal safety objectives is equivalent to
a recursive game with terminal rewards from the set $\{-1,0\}$. Other objectives
of interest are the standard $\omega$-regular objectives of Büchi, co-Büchi,
Parity, Streett, Rabin, Muller objectives, see e.g.~\cite{LMCS:UmmelsW2011} for
definitions. These objectives all generalize \emph{terminal} reachability and
safety objectives.

\subsubsection{Games on Trees and DAGs}
When the digraph $D$ of a given perfect information stochastic game $G$ is
\emph{acyclic} we refer to $G$ as an \emph{acyclic game}. Likewise, when $D$ is
a tree we refer to $G$ as a \emph{tree game}. A tree game is in particular an
acyclic game.

In an acyclic game we have that every play reaches a terminal. For a general
acyclic game there may be multiple plays reaching the same terminal, but for
a tree game there is a unique play reaching each specific terminal. Thus for a
tree game we may view the utility functions simply as terminal payoffs. This
also means that tree games correspond exactly to perfect information extensive
form games. The method of backward induction~\cite{NeumannMorgenstern1947} shows
existence of a (pure) SPE for any terminal payoff acyclic game, and by
considering the unfolding of an acyclic game into a tree game, also a SPE for
any acyclic game.

\subsection{The Existential Theory of the Reals}
The existential theory of the reals \ETR\ is the set of all true sentences of
the form $\exists x_1, \dots, x_n \in \RR : \varphi(x_1, \dots, x_n)$, where
$\varphi$ is a quantifier-free Boolean formula of inequalities and equalities of
polynomials with integer coefficients. Schaefer and
Štefankovič~\cite{TOCS:SchaeferS2015} defined the complexity class \cETR\ as the
closure of \ETR\ under polynomial time many-one reductions. Alternatively,
\cETR\ is equal to the constant-free Boolean part of the class
$\NP_\RR$~\cite{FOCM:BurgisserC2009}, which is the analogue class to $\NP$ in
the Blum-Shub-Smale model of computation~\cite{BAMS:BlumSS1989}. Clearly $\NP
\subseteq \cETR$ and from the decision procedure by Canny~\cite{JACM:Canny1988}
we have that $\cETR \subseteq$ \PSPACE.

A fundamental complete problem for \cETR\ is the problem $\QUAD$ of deciding
whether a system $\calS$ of quadratic polynomials in $n$ variables with integer
coefficients has a solution in $\RR^n$~\cite{BAMS:BlumSS1989}.
Schaefer~\cite{Schaefer13} proved that the similar problem $\QUAD(\UnitBall)$ of
deciding whether the system $\calS$ has a solution in the unit ball is also
\cETR-complete. Analogously one can prove (cf.~\cite{TCS:Hansen2019}) that the
problem $\QUAD(\CornerSimplex)$ of deciding whether the system $\calS$ has a
solution in the corner simplex $\CornerSimplex^n$ is \cETR-complete.

Define $\HOMQUAD(\Simplex)$ as the problem of deciding whether a system $\calS'$
of homogeneous quadratic polynomials in $n$ variables with integer coefficients
has a solution in the unit simplex $\Simplex^{n-1}$. This problem will form the
basis of our \cETR-hardness results.
\begin{proposition}
  $\HOMQUAD(\Simplex)$ is \cETR-complete.
\label{PROP:HomQUAD}
\end{proposition}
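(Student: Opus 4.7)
The plan is to prove \cETR-completeness by establishing membership in \cETR\ and hardness via reduction from $\QUAD(\CornerSimplex)$, which is stated to be \cETR-complete just before the proposition.

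For membership, I would simply observe that $\HOMQUAD(\Simplex)$ is a direct instance of \ETR: one existentially quantifies over $x_1,\dots,x_n$, and asserts the conjunction of $\sum_i x_i = 1$, $x_i \geq 0$ for each $i$, and $p(x) = 0$ for each $p \in \calS'$. This is a quantifier-free Boolean formula of polynomial (in fact quadratic) equalities and inequalities over $\RR$, so the reduction to \ETR\ is immediate.

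For hardness I would reduce from $\QUAD(\CornerSimplex)$. Given a system $\calS$ of quadratic polynomials $p_1,\dots,p_k$ in variables $x_1,\dots,x_n$, introduce a fresh variable $x_{n+1}$ and let $L(x) = x_1 + \cdots + x_{n+1}$. For each $p_j$, decompose it as $p_j = q_j + \ell_j + c_j$ where $q_j$ is the homogeneous quadratic part, $\ell_j$ the linear part, and $c_j$ the constant term, and define the homogenization
\[
\tilde p_j(x_1,\dots,x_{n+1}) \;=\; q_j(x_1,\dots,x_n) + \ell_j(x_1,\dots,x_n)\cdot L(x) + c_j\cdot L(x)^2 .
\]
Every term in $\tilde p_j$ has total degree exactly $2$, so $\calS' = \{\tilde p_1,\dots,\tilde p_k\}$ is a system of homogeneous quadratic polynomials with integer coefficients, constructible in polynomial time.

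To verify correctness, note the standard bijection between $\CornerSimplex^n$ and $\Simplex^n$ given by $(x_1,\dots,x_n) \mapsto (x_1,\dots,x_n, 1-\sum_{i=1}^n x_i)$. If $(x_1,\dots,x_{n+1}) \in \Simplex^n$ then $L(x)=1$, so $\tilde p_j(x_1,\dots,x_{n+1}) = p_j(x_1,\dots,x_n)$; conversely, any zero of $\calS$ in $\CornerSimplex^n$ lifts this way to a zero of $\calS'$ in $\Simplex^n$. Thus the two problems have the same answer on corresponding instances. The only (trivial) subtlety is handling the substitution carefully when $p_j$ has no constant or no linear term, which the formula above accommodates automatically; there is no real obstacle here, since homogenization with respect to a linear form that is identically $1$ on the feasible region is textbook.
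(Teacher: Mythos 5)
Your proposal is correct and follows essentially the same route as the paper: reduce from $\QUAD(\CornerSimplex)$ by adding a slack variable and homogenizing via the linear form $\sum_i x_i$, which equals $1$ on the simplex (your $\ell_j\cdot L$ and $c_j\cdot L^2$ are exactly the paper's replacements $\sum_j a x_i x_j$ and $\sum_{i,j} a x_i x_j$ written compactly). No gaps; the membership argument and the correspondence of solutions match the paper's.
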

\begin{proof}
  Membership of \cETR\ is straightforward. To obtain \cETR-hardness we reduce
  from $\QUAD(\CornerSimplex)$. Suppose $\calS$ is a system of quadratic
  equations in $n-1$ variables $x_1,\dots,x_{n-1}$. Introduce the slack variable
  $x_n = 1-\sum_{i=1}^{n-1}x_i$. We may then homogenize each polynomial of
  $\calS$ forming the set of homogeneous quadratic polynomials $\calS'$,
  replacing constant terms of the form $a$ by $\sum_{i=1}^n\sum_{j=1}^n ax_ix_j$
  and degree~1 terms of the form $ax_i$ by $\sum_{j=1}^n ax_ix_j$. Solutions of
  $\calS$ in $\CornerSimplex^{n-1}$ then correspond exactly to solutions of
  $\calS'$ in $\Simplex^n$, by either introducing or dropping the slack variable
  $x_n$.
\end{proof}

\section{\cETR-Completeness of Stationary NE} \label{SEC:ETR-Completeness}
Consider an $m$-player game $G$ and let $L \in \RR^m$ be a vector of
\emph{payoff demands}. We say that a strategy profile $\tau$ satisfies the
payoff demands $L$ if $U(\tau)\geq L$ (with component-wise comparison).

Our main result is a precise characterization of the complexity of deciding
existence of stationary NE in perfect information recursive games satisfying
given payoff demands.

\begin{theorem} \label{THM:RecursiveGamesETR} It is \cETR-complete to decide
  whether for a given $m$-player recursive game $G$ and payoff demands
  $L\in\RR^m$ there exists a stationary NE $\tau$ with $U(\tau)\geq L$. The
  problem is \cETR-complete even for acyclic 7-player recursive games with
  non-negative rewards. The same result holds for the analogous problem for
  stationary SPE.
\end{theorem}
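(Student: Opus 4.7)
For $\cETR$-membership I would introduce, for each non-chance node $v \in V_i$ with $i \ge 1$, variables $p_v(w)$ with $w \in \Nout(v)$ for the stationary strategy, together with value variables $V_i(v)$ for each player $i$ and node $v$. The simplex constraints on each $p_v$, the propagation equations $V_i(v) = \sum_{w \in \Nout(v)} p_v(w) V_i(w)$ at player nodes (with fixed $\pi_v$ substituted at chance nodes) and $V_i(t) = u_i(t)$ at terminals are polynomial (in)equalities of size polynomial in $|G|$. The stationary best-reply condition amounts to $V_i(w) \le V_i(v)$ for every $v \in V_i$ and every $w \in \Nout(v)$, and the payoff demands are $V_i(u_0) \ge L_i$. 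The whole system is an existential formula over $\RR$ of polynomial size; since under stationary strategies best reply at every node coincides with subgame perfection, the same formula yields $\cETR$-membership for both variants of the theorem.

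\textbf{Hardness, overall plan.} For $\cETR$-hardness I would reduce from $\HOMQUAD(\Simplex)$, which is $\cETR$-complete by Proposition~\ref{PROP:HomQUAD}. Given a system $\calS' = \{q_1,\dots,q_K\}$ of homogeneous quadratic polynomials in $n$ variables, the goal is a $7$-player acyclic recursive game with non-negative terminal rewards and payoff demands $L$ admitting a stationary NE $\tau$ with $U(\tau) \ge L$ iff some $x \in \Simplex^{n-1}$ satisfies $q_1(x) = \cdots = q_K(x) = 0$. I would assemble three families of gadgets: a \emph{variable gadget} $\Gvar$ in which the stationary mixed strategy at one player node encodes a candidate $x$; a \emph{multiplication gadget} $\Gmul$ in which two ``variable players'', each controlling a distinct copy of the choice of $x$, make the expected payoff of a subgame equal $x_i x_j$; and a \emph{polynomial gadget} $\Gpoly$ that aggregates signed $\Gmul$-subgames to realise each $q_k(x)$ as an expected payoff. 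A master player then randomises uniformly over the $K$ polynomial gadgets so that her demand is met precisely when every $q_k(x)=0$. The $7$-player bound is achieved by letting a small, fixed team of variable players share their roles across all gadgets, using the standard construction in which each variable player's best response is the opposing player's distribution, so that at any stationary NE they all play a common vector $x$.

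\textbf{Main obstacle.} The real difficulty is reconciling three simultaneous restrictions: non-negative rewards, acyclicity, and a constant number of players. Non-negative rewards forbid the usual punishment-based indifference gadgets, so I would replace them by affine shifts, adding a uniform constant to every terminal payoff in a subgame and absorbing the shift through an auxiliary chance edge, so that every \emph{difference} of expected payoffs (which is what the NE conditions actually constrain) is preserved while all individual rewards become non-negative. Acyclicity forbids the self-loops commonly used to ``resample'' equilibrium constraints, so the DAG must spell out every gadget in parallel, with the master player and the variable players reused across layers; this reuse is precisely what couples the different $\Gvar$ copies and forces them to encode the same $x$ at equilibrium. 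The combinatorial budgeting needed to realise all of $\Gmul$ and $\Gpoly$ with only seven reused players, while keeping every terminal reward non-negative, is the most delicate part of the proof.

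\textbf{Extension to SPE.} Because the game is acyclic, every node is reachable in some play, and the indifference identities used in the variable/multiplication gadgets hold at every such node by construction, any stationary NE produced from a solution of $\calS'$ is automatically a stationary SPE; conversely every stationary SPE is in particular a stationary NE. Hence the same reduction simultaneously establishes $\cETR$-hardness for the SPE variant.
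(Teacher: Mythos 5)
Your high-level architecture (reduction from $\HOMQUAD(\Simplex)$, variable/multiplication/polynomial gadgets aggregated by chance nodes, seven reused players) matches the paper's, but the mechanism you propose for coupling the gadgets does not work, and that coupling is the heart of the proof. You write that reusing a small team of variable players across all gadgets ``is precisely what couples the different $\Gvar$ copies and forces them to encode the same $x$ at equilibrium.'' It does not: a \emph{stationary} strategy may depend arbitrarily on the current node, so the same player can randomize with completely different probabilities at the two nodes implementing ``$x_i$'' in two different gadgets. Likewise, the ``best response is the opponent's distribution'' indifference trick is a strategic-form device; in a perfect-information DAG the two players move at distinct nodes and nothing forces imitation. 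The paper's coupling is structural: there is a \emph{single} node $v_i$ in $\Gvar$ (this is exactly where acyclic-but-not-tree structure is exploited), and inside each multiplication gadget $\Gmul(i,j,\alpha)$ two dedicated threat players (Players 2 and 3 for the first factor, Players 4 and 5 for the second) may deviate \emph{into} $v_i$ or $v_j$. Conditioned on reaching the gadget, Players 2 and 3 receive $x_i'$ and $1-x_i'$ from the local copy and would receive $x_i$ and $1-x_i$ by deviating, so in any NE in which Player 1 (who gets 0 whenever a threat is executed) receives payoff 1, both $x_i'\ge x_i$ and $1-x_i'\ge 1-x_i$ must hold, i.e.\ $x_i'=x_i$. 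Without shared nodes and these paired one-sided threats you have no way to certify consistency of the many copies of $x$.

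The same ``two one-sided inequalities yield an equation'' device is also what you are missing for extracting $q_k(x)=0$ from payoff \emph{demands}, and it is where non-negativity is actually handled --- not by an affine shift absorbed through an auxiliary chance edge. Writing $\alpha=(1+a^k_{ij})/2\in[0,1]$, Player 6 receives $\alpha x_ix_j$ and Player 7 receives $(1-\alpha)x_ix_j$ in $\Gmul(i,j,\alpha)$; averaging over $i,j$, their payoffs in $\Gpoly(k)$ are $\tfrac{1}{2n^2}(\norm{x}_1^2\pm q_k(x))$, and each can threaten to leave to a terminal worth $\tfrac{1}{2n^2}$, which together with $\norm{x}_1=1$ forces $q_k(x)=0$. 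The normalization $\norm{x}_1=1$ is in turn enforced by the payoff demands on Players 2 and 3, whose overall payoffs are $\tfrac1n\norm{x}_1$ and $1-\tfrac1n\norm{x}_1$. Finally, a caution on membership: the local condition $V_i(w)\le V_i(v)$ at \emph{every} node $v\in V_i$ characterizes stationary SPE rather than stationary NE (an NE may act suboptimally at nodes reached with probability 0), so the two variants require slightly different formulas; the paper defers to Ummels and Wojtczak's MDP-based formulation for exactly this reason.
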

Membership of \cETR\ follows by expressing that $\tau$ is a stationary NE (SPE)
satisfying the given payoff demands by an existential first-order formula over
the reals. This is done by expressing for all $i$ that $\tau_i$ is an optimal
solution of the Markov Decision Process (MDP) for Player~$i$ that results from
fixing the strategies of the other players according to $\tau_{-i}$. Ummels and
Wojtczak give a detailed proof for the (more general) case of mean-payoff
games~\cite[Theorem~7]{CONCUR:UmmelsW2011} (see the full version of the
paper~\cite{arXiv:UmmelsW2011} for the actual proof). We return to this in
Section~\ref{SEC:BorelMeanPayoff}.

Our proof of \cETR-hardness is by reduction from the problem
$\HOMQUAD(\Simplex)$ and involves several gadget games that we describe next. In
the following let $\calS$ be a system of homogeneous quadratic polynomials
$q_1(x),\dots,q_\ell(x)$ in variables $x=(x_1,\dots,x_n)$. We write $q_k(x) =
\sum_{i=1}^n \sum_{j=1}^n a_{ij}^{k} x_i x_j$ for $k = 1,\dots,\ell$, and assume
that coefficients are scaled to be rational numbers in the interval $[-1,1]$.
That is $a_{ij}^k \in \mathbb{Q}$ and $-1 \leq a_{ij}^k \leq 1$, for all
$i,j,k$.

\begin{remark}
  For clarity, drawings of the many gadget games are provided in accompanying
  figures. Chance nodes $v \in V_0$ are diamond-shaped with out-going arcs
  labelled by the values of $\pi_v$. Nodes $v \in V_i$ controlled by Player~$i$
  are circular nodes labelled with $i$ above and unlabelled out-going arcs.
  Nodes themselves may also contain labels, though these labels are only used to
  refer to the specific nodes inside the proofs.
\end{remark}
The first gadget is the variable selection game $\Gvar$ shown in
Figure~\ref{fig:Gvar}. An initial chance node leads to Player~$1$ nodes
$v_1,\dots,v_n$, each chosen with probability~$\tfrac{1}{n}$. In node $v_i$,
Player~1 makes a binary choice between either giving payoff~1 to Player~2 and
Player~4 or to Player~3 and Player~5 and all other players payoff~0. We let
$x_i$ denote the probability of the former choice, and let $x=(x_1,\dots,x_n)$.
Since $0\leq x_i\leq 1$, it follows that $x\geq 0$ and $\norm{x}_1\leq n$.

\begin{figure}[ht]
  \centering

  \begin{subfigure}{0.45\linewidth}
    \centering
    \begin{tikzpicture}[shorten >= 1, node distance = 2cm, on grid, minimum size = 0.8cm]
  \node[shape=circle, draw=black, label=above:$1$, label=left:$\rightarrow$] (ni_1) {$v_i$};
  \node[draw=none] (gi_1) [above right=0.5cm and 3cm of ni_1] {$(0,1,0,1,0,0,0)$};
  \node[draw=none] (gi_2) [below right=0.5cm and 3cm of ni_1] {$(0,0,1,0,1,0,0)$};
  \path[->]
     (ni_1) edge [bend left=8] node [above] {\small \color{gray} $x_i$} (gi_1)
            edge [bend right=8] node [below] {\small \color{gray} $1-x_i$} (gi_2)
   ;
\end{tikzpicture}
    \caption{The nodes $v_i$ of $\Gvar$}
  \end{subfigure}
  \begin{subfigure}{0.45\linewidth}
    \centering
    \begin{tikzpicture}[shorten >= 1, node distance = 2cm, on grid, minimum size = 0.8cm]
  \node[shape=diamond, draw=black, label=left:$\rightarrow$] (e1) {};
  \node[draw=none,rectangle] (vi) [right=2cm of e1] {};
  \node[draw=none,rectangle] (v1) [above=.8cm of vi] {$v_1$};
  \node[draw=none,rectangle] (vn) [below=.8cm of vi] {$v_n$};
  \path[->]
     (e1) edge [bend left=10] node [above left] {$\frac{1}{n}$} (v1)
          edge [loosely dashed] (vi)
          edge [bend right=10] node [below left] {$\frac{1}{n}$} (vn)
          
   ;
   \path (v1) edge node [black, opacity=1, sloped] {\dots} (vn) [opacity=0];
\end{tikzpicture}

    \caption{The game $\Gvar$}
  \end{subfigure}

  \caption{The variable selection game $\Gvar$.}
  \label{fig:Gvar}
\end{figure}

The payoff analysis of $\Gvar$ is straightforward.
\begin{lemma} \label{LEM:G_x-payoff}
  The payoff profile of the subgame of $\Gvar$ starting from node $v_i$ is equal
  to $(0,x_i,1-x_i,x_i,1-x_i,0,0)$, for $i=1,\dots,n$. The payoff profile of the
  game $\Gvar$ itself is of the form
  \begin{equation*}
    \left(
      0,
      \tfrac{1}{n}\norm{x}_1,
      1-\tfrac{1}{n}\norm{x}_1,
      \tfrac{1}{n}\norm{x}_1,
      1-\tfrac{1}{n}\norm{x}_1,
      0,
      0
    \right)
    \enspace .
  \end{equation*}
\end{lemma}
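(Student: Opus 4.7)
The plan is a direct expected-payoff computation, following the structure of the game $\Gvar$ in two stages: first analyze each subgame rooted at a node $v_i$, then average over the initial chance node.

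For the first part, I would fix $i \in \{1,\dots,n\}$ and consider the subgame starting at $v_i$. This subgame has exactly two plays, both of length one: with probability $x_i$ Player~1 moves to the terminal with reward vector $(0,1,0,1,0,0,0)$, and with probability $1-x_i$ she moves to the terminal with reward vector $(0,0,1,0,1,0,0)$. Since these are the only plays and every play reaches a terminal in one step, the payoff profile is obtained from the definition of expected payoff as
\begin{equation*}
  x_i \cdot (0,1,0,1,0,0,0) + (1-x_i) \cdot (0,0,1,0,1,0,0) = (0, x_i, 1-x_i, x_i, 1-x_i, 0, 0),
\end{equation*}
which is the first claim.

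For the second part, the root of $\Gvar$ is a chance node that transitions to $v_i$ with probability $\tfrac{1}{n}$ for each $i$. By the law of total expectation, the payoff profile of $\Gvar$ is the $\tfrac{1}{n}$-weighted average of the payoff profiles of the $n$ subgames just computed:
\begin{equation*}
  \frac{1}{n}\sum_{i=1}^n (0, x_i, 1-x_i, x_i, 1-x_i, 0, 0)
  = \left(0, \tfrac{1}{n}\sum_{i=1}^n x_i, 1 - \tfrac{1}{n}\sum_{i=1}^n x_i, \tfrac{1}{n}\sum_{i=1}^n x_i, 1 - \tfrac{1}{n}\sum_{i=1}^n x_i, 0, 0\right).
\end{equation*}
Since $x_i \in [0,1]$ for every $i$, we have $\sum_{i=1}^n x_i = \norm{x}_1$, giving the stated form of the payoff profile.

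There is no real obstacle here; the lemma is a bookkeeping statement recording the payoff structure of $\Gvar$ for later reference. The only thing to be slightly careful about is the identification $\sum_{i=1}^n x_i = \norm{x}_1$, which uses non-negativity of the $x_i$ (which in turn follows from their interpretation as probabilities at Player~1's nodes).
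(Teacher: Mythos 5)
Your proof is correct and matches the paper's intent exactly; the paper simply declares the payoff analysis of $\Gvar$ ``straightforward'' and omits the computation, which is precisely the two-stage expectation you carry out. Your remark that $\sum_{i=1}^n x_i = \norm{x}_1$ relies on $x_i \geq 0$ is also consistent with the paper, which notes $x \geq 0$ just before stating the lemma.
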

\noindent We eventually want to enforce that $x \in \Simplex^{n-1}$ by payoff
demands. Note that this can be obtained locally in $\Gvar$ by payoff demands
$\tfrac{1}{n}$ for Player~2 and $\tfrac{n-1}{n}$ for Player~3.

The second gadget is the multiplication game $\Gmul(i,j,\alpha)$, defined for $1
\leq i,j \leq n$ and $\alpha \in [0,1]$ and shown in Figure~\ref{fig:Gmul}. Note
that it connects to nodes $v_i$ and $v_j$ of $\Gvar$. By
Lemma~\ref{LEM:G_x-payoff} these may be viewed as terminal nodes with reward
vectors $(0,x_i,1-x_i,x_i,1-x_i,0,0)$ and $(0,x_j,1-x_j,x_j,1-x_j,0,0)$, and
we shall do so in the analysis in order to be able to analyze
$\Gmul(i,j,\alpha)$ separately.

First Player~2 and Player~3 are able to threat to leave to node~$v_i$. Otherwise
Player~1 is given a binary choice: either continue or give Player~1 and Player~3
reward~1. We denote by $x'_i$ the probability of the former choice. If Player~1
continues, Player~4 and Player~5 are able to threat to leave to node~$v_j$.
Otherwise Player~1 is given a binary choice between terminal reward vectors
$(1,1,0,1,0,\alpha,1-\alpha)$ and $(1,1,0,0,1,0,0)$. We denote by $x'_j$ the
probability the former choice.
\begin{figure}[ht]
  \centering

  \begin{tikzpicture}[shorten >= 1, node distance = 2cm, on grid, minimum size = 0.8cm]
  \node[shape=circle, draw=black]
      (threat1) {$v_i$};
  \node[shape=circle, draw=black, label=above:$2$, label=left:$\rightarrow$]
      (threat1_1) [above left=1.4cm and 0.6cm of threat1] {$w_1$};
  \node[shape=circle, draw=black, label=above:$3$]
      (threat1_2) [above right=1.4cm and 0.6cm of threat1] {$w_2$};
  \node[shape=circle, draw=black, label=above:$1$]
      (repl1) [right=1.7cm of threat1_2] {$w_3$};
  \node[draw=none]
      (repl1_terminal) [below=1.4cm of repl1] {$(1,0,1,0,0,0,0)$};
  \node[shape=circle, draw=black, label=above:$4$]
      (threat2_1) [right=1.7cm of repl1] {$w_4$};
  \node[shape=circle, draw=black, label=above:$5$]
      (threat2_2) [right=1.2 of threat2_1] {$w_5$};
  \node[shape=circle, draw=black]
      (threat2) [below right=1.4cm and 0.6cm of threat2_1] {$v_j$};
  \node[shape=circle, draw=black, label=above:$1$]
      (repl2) [right=1.7cm of threat2_2] {$w_6$};
  \node[draw=none] (repl2_terminal1)
      [above right=1.5cm and 1.5cm of repl2] {$(1,1,0,1,0,\alpha,1-\alpha)$};
  \node[draw=none] (repl2_terminal2)
      [below right=1.5cm and 1.5cm of repl2] {$(1,1,0,0,1,0,0)$};
  \path[->]
    (threat1_1) edge [bend left=5] (threat1)
    (threat1_1) edge (threat1_2)
    (threat1_2) edge [bend right=5] (threat1)
    (threat1_2) edge (repl1)
    (repl1) edge node [above=-0.15cm] {\small \color{gray} $x_i'$} (threat2_1)
            edge node [right] {\small \color{gray} $1-x_i'$} (repl1_terminal)
    (threat2_1) edge (threat2_2)
    (threat2_1) edge [bend left=5] (threat2)
    (threat2_2) edge [bend right=5] (threat2)
    (threat2_2) edge (repl2)
    (repl2) edge [bend right=10]
        node [right=.2cm] {\small \color{gray} $x_j'$} (repl2_terminal1)
    (repl2) edge [bend left=10]
        node [right=.2cm] {\small \color{gray} $1 - x_j'$} (repl2_terminal2)
  ;
\end{tikzpicture}


  \caption{The multiplication game $\Gmul(i,j,\alpha)$.}
  \label{fig:Gmul}
\end{figure}
\begin{lemma} \label{LEM:G_mult-NE}
  Any NE payoff profile of $\Gmul(i,j,\alpha)$ in which
  Player~1 receives payoff~1 is of the form
  \begin{equation*}
    \left(
      1,
      x_i,
      1-x_i,
      x_ix_j,
      x_i(1-x_j),
      \alpha x_ix_j,
      (1-\alpha)x_ix_j
    \right)
    \enspace .
  \end{equation*}
\end{lemma}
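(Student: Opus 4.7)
The plan is to exploit the rigid payoff structure of Player~1 in $\Gmul(i,j,\alpha)$: every terminal reward vector in the gadget, read off from Figure~\ref{fig:Gmul} together with the payoffs of $v_i$ and $v_j$ given by Lemma~\ref{LEM:G_x-payoff}, awards Player~1 payoff $1$ except for $v_i$ and $v_j$, which pay Player~1 exactly $0$. Therefore Player~1's expected payoff being $1$ in an NE is equivalent to the equilibrium play avoiding both $v_i$ and $v_j$ with probability~$1$. This single observation forces Player~2 at $w_1$ to move to $w_2$ with probability~$1$, Player~3 at $w_2$ to move to $w_3$ with probability~$1$, and (provided $w_4$ is reached) Player~4 at $w_4$ to move to $w_5$ with probability~$1$ and Player~5 at $w_5$ to move to $w_6$ with probability~$1$.

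Next I would convert the NE best-reply conditions of these four threat players into two pairs of scalar inequalities that pin down $x_i'$ and $x_j'$. For Player~2 at $w_1$ the deviation to $v_i$ yields payoff $x_i$, while continuing yields $x_i'$ (Player~2 collects $0$ at the $w_3$-branch terminal with probability $1-x_i'$ and $1$ at every deeper terminal with probability $x_i'$), so the best-reply condition is $x_i' \geq x_i$. The mirror calculation at $w_2$ gives Player~3 payoff $1-x_i$ from deviating versus $1-x_i'$ from continuing, hence $x_i' \leq x_i$. The two inequalities force $x_i' = x_i$, and the analogous pair of inequalities at $w_4$ and $w_5$ forces $x_j' = x_j$. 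A direct integration of the terminal vectors along the equilibrium play, namely $(1-x_i')(1,0,1,0,0,0,0) + x_i'(1,1,0,x_j',1-x_j',\alpha x_j',(1-\alpha)x_j')$, then yields the claimed profile upon substituting $x_i' = x_i$ and $x_j' = x_j$.

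The one subtlety to check is the degenerate case $x_i' = 0$, in which $w_4$ is unreached and the NE imposes no constraint on $x_j'$ or on Players~4 and~5. Here, however, the inequality $x_i' \geq x_i$ already forces $x_i = 0$, so every term of the claimed formula involving $x_i$ vanishes and the profile collapses to $(1,0,1,0,0,0,0)$, exactly the realized payoff, independently of $x_j'$. Player~1's own optimality at $w_3$ and $w_6$ is never an obstacle because all continuations available to her give payoff~$1$, so any $x_i', x_j' \in [0,1]$ is a best reply; it is precisely the threat-player indifference conditions, not Player~1's own optimization, that pin down the two probabilities.
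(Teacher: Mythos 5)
Your proof is correct and follows essentially the same route as the paper's: Player~1 receiving payoff~1 forces all threats to go unexecuted, and the paired best-reply inequalities for Players~2/3 at $w_1,w_2$ and Players~4/5 at $w_4,w_5$ pin down $x_i'=x_i$ and $x_j'=x_j$, from which the profile follows by direct computation. Your explicit treatment of the degenerate case $x_i'=0$ (where $w_6$ is unreached and $x_j'$ is unconstrained) is a detail the paper leaves implicit, but it only confirms the same conclusion.
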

\begin{proof}
  For Player~1 to receive payoff~1, neither of Player~2, 3, 4, or~5 execute
  their threats to leave to $v_i$ or $v_j$ with positive probability.
  Conditioned on play reaching node $w_3$, Player~2 and Player~3 receives payoff
  $x'_i$ and $1-x'_i$, respectively. Thus, unless $x'_i=x_i$, either Player~2 or
  Player~3 would gain by leaving to $v_i$ in node $w_1$ or $w_2$. Similarly,
  conditioned on play reaching node $w_6$, Player~4 and Player~5 receive payoff
  $x'_j$ and $1-x'_j$, respectively. Thus, unless $x'_j=x_j$, either Player~4 or
  Player~5 would gain by leaving to $v_j$ in node $w_4$ or $w_5$. It follows
  that the payoff profile is as claimed.
\end{proof}

The third gadget is the polynomial evaluation game $\Gpoly(k)$ defined by the
polynomial $q_k(x)$ and shown in Figure~\ref{fig:Gpol}. First Player~6 and
Player~7 are in turn able to threat to leave to a terminal giving payoff
$1/(2n^2)$ (and all other players payoff~0). Otherwise a chance node leads to
the game $\Gmul(i,j,(1+a^k_{ij})/2)$, with probability $1/n^2$, for
$i,j=1,\dots,n$.

\begin{figure}[ht]
  \centering
    \begin{tikzpicture}[shorten >= 1, node distance = 2cm, on grid, minimum size = 0.8cm]
    \node[shape=circle, draw=black, label=above:$6$, label=left:$\rightarrow$] (threat1) {};
    \node[draw=none] (threat1_t) [below=1.3cm of threat1] {$(0,0,0,0,0,\frac{1}{2 n^2},0)$};
    \node[shape=circle, draw=black, label=above:$7$] (threat2) [right=3cm of threat1] {};
    \node[draw=none] (threat2_t) [below=1.3cm of threat2] {$(0,0,0,0,0,0,\frac{1}{2n^2})$};
    \node[shape=diamond, draw=black] (chance_node) [right=2cm of threat2] {};
    \node[draw=none] (g1_1) [above right=1.2cm and 3cm of chance_node] {$\Gmul(1,1,\frac{1 + a_{1,1}^k}{2})$};
    \node[draw=none] (gi_j) [above right=0cm and 3cm of chance_node] {$\Gmul(i,j,\frac{1 + a_{i,j}^k}{2})$};
    \node[draw=none] (gn_n) [below right=1.2cm and 3cm of chance_node] {$\Gmul(n,n,\frac{1 + a_{n,n}^k}{2})$};
    \node[draw=none] (gi) [below right=0.6cm and 2cm of chance_node] {};
    \node[draw=none] (gj) [above right=0.6cm and 2cm of chance_node] {};
    \path[->]
      (threat1) edge (threat2)
      (threat1) edge (threat1_t)
      (threat2) edge (threat2_t)
      (threat2) edge (chance_node)
      (chance_node) edge [bend left=20] node [left] {$\frac{1}{n^2}$} (g1_1)
      (chance_node) edge (gi_j)
      (chance_node) edge [bend right=25] node [left] {$\frac{1}{n^2}$} (gn_n)
    ;
    \path[->]
      (chance_node) edge [bend right=10] (gi)
      (chance_node) edge [bend left=10] (gj)[loosely dashed]
    ;
    \path[-]
      (g1_1) edge (gi_j)
      (gi_j) edge (gn_n) [loosely dotted, thick]
    ;
  \end{tikzpicture}
  
  \caption{The polynomial evaluation game $\Gpoly(k)$}
  \label{fig:Gpol}
\end{figure}

The analysis of $\Gpoly(k)$ follows by using Lemma~\ref{LEM:G_mult-NE}.
\begin{lemma} \label{LEM:G_poly-NE}
  Any NE payoff profile of $\Gpoly(k)$ in which Player~1 receive payoff~1 is of
  the form
  \begin{equation*}
    \scalebox{0.9}[1]{$
    \left(
      1,
      \frac{1}{n}\norm{x}_1,
      1-\frac{1}{n}\norm{x}_1,
      (\frac{1}{n}\norm{x}_1)^2,
      \frac{1}{n}\norm{x}_1(1-\frac{1}{n}\norm{x}_1),
      \frac{1}{2n^2}(\norm{x}_1^2+q_k(x)),
      \frac{1}{2n^2}(\norm{x}_1^2-q_k(x))
    \right)
    \enspace .
    $}
  \end{equation*}
\end{lemma}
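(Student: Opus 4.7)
The plan is to push the constraint ``Player 1 receives payoff 1'' downward through the game tree of $\Gpoly(k)$ and then assemble the payoff profile by an expected-value computation at the top-level chance node. First I would observe that Player 1's payoff is bounded above by $1$ everywhere in $\Gpoly(k)$, since no terminal (either in $\Gpoly(k)$ itself or inside any $\Gmul$ subgame) gives Player 1 more than $1$. The two threat terminals controlled by Player~6 and Player~7 both give Player~1 payoff $0$, so if either of them executes their threat with positive probability, Player~1's overall expected payoff drops strictly below $1$. Therefore both Player~6 and Player~7 must pass to the chance node almost surely.

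Next I would invoke the standard fact that the restriction of a NE to a subgame reached with positive probability is itself a NE of that subgame. Each multiplication subgame $\Gmul(i,j,(1+a^k_{ij})/2)$ is reached with probability $1/n^2>0$ from the chance node, so the induced strategies constitute a NE there. Since Player~1's overall payoff $1$ equals the uniform average of Player~1's expected payoffs in the $n^2$ multiplication subgames, and each of these is at most $1$, each must be exactly $1$. Lemma~\ref{LEM:G_mult-NE} then pins down the payoff profile of $\Gmul(i,j,(1+a^k_{ij})/2)$ to
\begin{equation*}
  \left(1,\, x_i,\, 1-x_i,\, x_ix_j,\, x_i(1-x_j),\, \tfrac{1+a^k_{ij}}{2}x_ix_j,\, \tfrac{1-a^k_{ij}}{2}x_ix_j\right).
\end{equation*}

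Finally I would obtain the payoff profile of $\Gpoly(k)$ as the uniform average of these profiles over $(i,j)\in\{1,\dots,n\}^2$, using the identities $\sum_{i,j} x_i = n\norm{x}_1$, $\sum_{i,j} x_ix_j = \norm{x}_1^2$, and $\sum_{i,j} a^k_{ij}x_ix_j = q_k(x)$. For instance, Player~6 receives $\tfrac{1}{2n^2}\sum_{i,j}(1+a^k_{ij})x_ix_j = \tfrac{1}{2n^2}(\norm{x}_1^2+q_k(x))$, Player~5 receives $\tfrac{1}{n^2}\norm{x}_1(n-\norm{x}_1)$, and the remaining coordinates drop out identically. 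The only non-routine step is the subgame-restriction argument; once that is in place the remaining computation is mechanical.
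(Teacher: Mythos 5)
Your proof is correct and follows essentially the same route as the paper: rule out the threats of Player~6 and Player~7, deduce that Player~1 must receive payoff~1 in each induced $\Gmul(i,j,(1+a^k_{ij})/2)$, apply Lemma~\ref{LEM:G_mult-NE}, and average over the $n^2$ subgames. You simply make explicit two steps the paper leaves implicit (the upper bound of~1 on Player~1's payoff and the fact that a NE induces a NE on subgames reached with positive probability), and both are justified correctly.
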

\begin{proof}
  For Player~1 to receive payoff~1, neither Player~6 nor Player~7 execute their
  threats to leave directly to the terminal nodes. Likewise, Player~1 must
  receive payoff~1 in each of the games $\Gmul(i,j,(1+a_{ij}^k)/2)$, each of
  which by Lemma~\ref{LEM:G_mult-NE} then has the payoff profile
  $(1,x_i,1-x_i,x_ix_j,x_i(1-x_j),(1+a_{ij}^k)x_ix_j/2,(1-a_{ij}^k)x_ix_j/2)$.
  Taking the average of this over all pairs $i,j \in \{1,\dots,n\}$ is easily
  seen to yield the claimed payoff vector. For instance, the payoff of Player~6
  is equal to
  \begin{align*}
    \frac{1}{n^2}\sum_{i=1}^n\sum_{j=1}^n \left(\frac{1+a_{ij}^k}{2}\right)x_ix_j
    &= \frac{1}{2n^2}
      \left(
        \left(\sum_{i=1}^n x_i\right)
        \left(\sum_{j=1}^n x_j\right) +
        \left(\sum_{i=1}^n\sum_{j=1}^n a_{ij}^k x_ix_j\right)
      \right)
    \\ &= \frac{1}{2n^2}(\norm{x}_1^2+q_k(x)) \enspace .
  \end{align*}
\end{proof}
\begin{corollary} \label{COR:G_poly-NE}
  In a NE of $\Gpoly$ where $\norm{x}_1=1$ and Player~1 receives
  payoff~1 we must have that $q_k(x)=0$.
\end{corollary}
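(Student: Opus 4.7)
The plan is to exploit the fact that, in $\Gpoly(k)$, Players~6 and~7 each control a node at which they may divert play to a terminal yielding themselves exactly $\tfrac{1}{2n^2}$. In the NE under consideration, Player~1 attains payoff~1, which forces play to reach the chance node and then each multiplication subgame $\Gmul(i,j,(1+a_{ij}^k)/2)$ with the NE property holding conditionally (as used already in the proof of Lemma~\ref{LEM:G_poly-NE}). In particular, Players~6 and~7 choose not to deviate, so by the best-reply condition their equilibrium payoffs must be at least what a deviation would guarantee, namely $\tfrac{1}{2n^2}$.

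Next I would plug $\norm{x}_1 = 1$ into the payoff profile from Lemma~\ref{LEM:G_poly-NE}. This gives Player~6 a payoff of $\tfrac{1}{2n^2}(1 + q_k(x))$ and Player~7 a payoff of $\tfrac{1}{2n^2}(1 - q_k(x))$. The two best-reply inequalities therefore read
\begin{equation*}
  \tfrac{1}{2n^2}(1 + q_k(x)) \geq \tfrac{1}{2n^2}
  \quad\text{and}\quad
  \tfrac{1}{2n^2}(1 - q_k(x)) \geq \tfrac{1}{2n^2},
\end{equation*}
which simplify to $q_k(x) \geq 0$ and $q_k(x) \leq 0$, yielding $q_k(x) = 0$.

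There is no real obstacle here: once Lemma~\ref{LEM:G_poly-NE} is established, the corollary reduces to a one-line observation that the symmetric threats of Players~6 and~7 pin $q_k(x)$ to zero from both sides. The only subtlety worth noting explicitly is justifying that the threat payoff $\tfrac{1}{2n^2}$ is indeed an available deviation value for each of these players at their respective threat nodes, which is immediate from the construction in Figure~\ref{fig:Gpol}.
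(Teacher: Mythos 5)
Your proposal is correct and matches the paper's own proof: both argue that since Player~1's payoff of~1 forces Players~6 and~7 not to execute their threats, their equilibrium payoffs $\tfrac{1}{2n^2}(\norm{x}_1^2 \pm q_k(x))$ must each be at least the deviation value $\tfrac{1}{2n^2}$, which with $\norm{x}_1=1$ pins $q_k(x)$ to zero from both sides. The only difference is cosmetic (you substitute $\norm{x}_1=1$ before writing the inequalities rather than after).
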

\begin{proof}
  Again, for Player~1 to receive payoff~1, neither of Player~6 and Player~7
  execute their threats to leave directly to the the terminal nodes. For this to
  happen it is required that $\tfrac{1}{2n^2}(\norm{x}_1^2+q_k(x)) \geq
  \tfrac{1}{2n^2}$ and $\tfrac{1}{2n^2}(\norm{x}_1^2-q_k(x)) \geq
  \tfrac{1}{2n^2}$. When $\norm{x}_1=1$ this implies that
  $\tfrac{1}{2n^2}q_k(x)\geq 0$ and $-\tfrac{1}{2n^2}q_k(x)\geq 0$, and thus
  $q_k(x)=0$.
\end{proof}

\begin{figure}[ht]
  \centering
    \begin{tikzpicture}[shorten >= 1, node distance = 2cm, on grid, minimum size = 0.8cm]
    \node[shape=diamond, draw=black, label=above:$\downarrow$] (e1) {$v_0$};
    \node[draw=none] (G1) [left=2cm of e1] {$\Gvar$};
    \node[shape=diamond, draw=black] (e2) [right=2cm of e1] {};
    \node[draw=none] (q1) [above right=0.8cm and 2cm of e2] {$\Gpoly(1)$};
    \node[draw=none] (qi) [right=of e2] {\phantom{MM}};
    \node[draw=none] (qn) [below right=0.8cm and 2cm of e2] {$\Gpoly(\ell)$};

    \path[->]
      (e1) edge node [below] {$\frac{1}{2}$} (G1)
      (e1) edge node [below] {$\frac{1}{2}$} (e2)
      (e2) edge [bend left=25] node [above left=-0.2cm] {$\frac{1}{\ell}$} (q1)
      (e2) edge [bend right=25] node [below left=-0.2cm] {$\frac{1}{\ell}$} (qn)
    ;
    \path[->] (e2) edge (qi) [loosely dashed];
    \path (q1) edge node [black, opacity=1, sloped] {\dots} (qn) [opacity=0];

  \end{tikzpicture}
  \caption{The game $\GS$.}
  \label{fig:G(S)}
\end{figure}

We now have all the ingredients needed for our \cETR-hardness proof.
\begin{proof}[Proof of Theorem~\ref{THM:RecursiveGamesETR}]
  We already discussed the proof of \cETR-membership. For proving \cETR\
  hardness we reduce from $\HOMQUAD(\Simplex)$. As above, let $\calS$ be a
  system of homogeneous quadratic polynomials $q_1(x),\dots,q_\ell(x)$ in
  variables $x=(x_1,\dots,x_n)$. We construct the game $\GS$ as shown in
  Figure~\ref{fig:G(S)}. Using initial chance nodes, play proceeds to $\Gvar$
  with probability~$\tfrac{1}{2}$ and to $\Gpoly(k)$ with probability
  $\tfrac{1}{2\ell}$, for $k=1,\dots,\ell$.

  We shall prove that $\GS$ has a stationary NE satisfying the payoff demands
  \[
    L=\left(
      \frac{1}{2},
      \frac{1}{n},
      1-\frac{1}{n},
      \frac{1 + n}{2n^2},
      \frac{n^2-1}{2n^2},
      \frac{1}{4n^2},
      \frac{1}{4n^2}
    \right) \enspace ,
  \]
  if and only if there exists $x \in \Simplex^{n-1}$ such that $q_k(x)=0$, for
  all $k$.

  Suppose first that $\GS$ has a NE satisfying the payoff demands~$L$. Since
  Player~1 receives payoff~0 in $\Gvar$, Player~1 must receive payoff~1 in every
  game $\Gpoly(k)$. Thus by Lemma~\ref{LEM:G_poly-NE} Player~2 and Player~3
  receive payoff~$\tfrac{1}{n}\norm{x}_1$ and $1-\tfrac{1}{n}\norm{x}_{1}$,
  respectively, which by Lemma~\ref{LEM:G_x-payoff} also is their payoff in
  $\Gvar$. We conclude that $\tfrac{1}{n}\norm{x}_{1}$ and
  $1-\tfrac{1}{n}\norm{x}_{1}$ is also the payoff of Player~2 and Player~3 in
  $\GS$. The payoff demands $L$ gives that $\tfrac{1}{n}\norm{x}_1 \geq
  \tfrac{1}{n}$ and $1-\tfrac{1}{n}\norm{x}_{1} \geq 1-\tfrac{1}{n}$, which
  implies $\norm{x}_1=1$. By Corollary~\ref{COR:G_poly-NE} this implies
  $q_k(x)=0$ for all~$k$.

  Suppose now that $x \in \Simplex^{n-1}$ is such that $q_k(x)=0$ for all~$k$.
  We let Player~1 play according to $x$ in $\Gvar$ and consistent to that (i.e.\
  also according to $x$) in $\Gmul(i,j,(1+a_{ij}^k)/2)$, for all $i,j,k$. We let
  all other players \emph{not} execute any of their threats. It remains to be
  shown that this strategy profile $\tau$ is a NE. No strategy profile yields
  payoff larger than~$\tfrac{1}{2}$ to Player~1, so Player~1 has no incentive to
  change strategy. What remains to prove is that no player gains from executing
  a threat. In $\Gmul(i,j,(1+a_{ij}^k)/2)$, if either Player~2 or 3 execute
  their threat to $v_i$ in $\Gvar$ then their payoff stays unchanged, since
  Player~1 is playing according to $x_i$ in both $v_i$ and $w_3$. Likewise, the
  payoffs for Player~4 and Player~5 are neither improved by executing their
  threat to $v_j$. 
  In $\Gpoly(k)$, since $\norm{x}_1=1$ and $q_k(x)=0$, Player~6 and Player~7 are
  both receiving payoff~$\tfrac{1}{2n^2}$ which is also exactly what they would
  receive by executing their threat. This concludes the proof that $x$ defines a
  NE. Let us finally note that the payoff profile of $\Gvar$ is
  $(0,\tfrac{1}{n},1-\tfrac{1}{n},\tfrac{1}{n},1-\tfrac{1}{n},0,0)$ and the
  (average of) the payoff profiles of $\Gpoly(k)$ is $(1,\tfrac{1}{n},$
  $1-\tfrac{1}{n},\tfrac{1}{n^2},\tfrac{1}{n}(1-\tfrac{1}{n}),\tfrac{1}{2n^2},\tfrac{1}{2n^2})$,
  and the average of these is exactly~$L$. Let us finally note that $\tau$ is
  easily seen to in fact be a SPE.
\end{proof}
\begin{remark} \label{REM:First3Entries}
  We only used the first~$3$ entries of the payoff demands~$L$ to argue a NE
  satisfying the payoff demand implies the system $\calS$ is satisfied. We could
  therefore equivalently have used the demands
  $L=(\tfrac{1}{2},\tfrac{1}{n},\tfrac{n-1}{n},0,0,0,0)$.
\end{remark}

\subsection{Deterministic Games} \label{SEC:DeterministicGames}
Ummels and Wojtczak~\cite{CONCUR:UmmelsW2011} constructed a gadget that allows
for simulation of a chance node by a deterministic game under certain
conditions. Ummels and Wojtczak used this to prove that deciding existence of a
stationary NE is $\SqrtSum$-hard for 8-player recursive games. Their proof
constructs games with both positive and negative terminal rewards. Terminals
with negative rewards are used to make a player prefer infinite play away from
terminals to such a terminal. We describe their gadget below, modified to have
non-negative terminal rewards (and thus not applicable in the reduction of
Ummels and Wojtczak). In acyclic games, as we have constructed, any play reaches
a terminal, and in turn makes non-negative rewards sufficient.

Let $p \in \CornerSimplex^n$ with $\norm{p}_1 < 1$. We construct a gadget game
$\Gchance(p)$ with designated nodes $u_1\dots,u_n$ in order to simulate a single
chance node that for each $i=1,\dots,n$ continues play in nodes~$u_i$ with
probability~$p_i$ and with the remaining probability~${1-\norm{p}_1>0}$ leads to
a terminal~$\bot$.

Define $q_1,\dots,q_n$ by
\[
  q_i = \frac{1-\sum_{j=i}^n p_j}{1-\sum_{j=i+1}^{n}p_j} \enspace .
\]
Note that $\prod_{j=i}^n q_j = 1 - \sum_{j=i}^n p_j$ for all $i = 1, \dots, n$.
The chance node described above can be simulated by the following stochastic
process in steps $k=0,\dots,n$. When $k<n$, we select node $u_{n-k}$ as the
outcome with probability $1-q_{n-k}$, and otherwise proceed to the next step
$k+1$ with probability $q_{n-k}$. When $k=n$, we end with outcome~$\bot$. Then
the probability of outcome $u_i$ is equal to
\begin{equation*}
  (1-q_i)\prod_{j=i+1}^n q_j
  = (\prod_{j=i+1}^n q_j) - (\prod_{j=i}^n q_j)
  = (1 - \sum_{j=i+1}^n p_j) - (1 - \sum_{j=i}^n p_j) =
  p_i
\end{equation*}
as required.

\begin{figure}[ht]
  \centering
  \begin{tikzpicture}[shorten >= 1, node distance = 2cm, on grid, minimum size = 0.8cm]
  
  \node[shape=circle, draw=black, label=above:$2$, label=left:$\rightarrow$] (sn) {$s_n$};

  \node[shape=circle, draw=black, label=above:$1$] (ti1) [right = 2.1cm of sn] {$t_{i+1}$};
  \node[shape=circle, draw=black, label=above:$2$] (si) [right  = 1.5cm of ti1] {$s_{i}$};
  \node[shape=circle, draw=black, label=above:$3$] (ri) [right  = 1.5cm of si] {$r_{i}$};
  \node[shape=circle, draw=black, label=above:$1$] (ti) [right  = 1.5cm of ri] {$t_{i}$};

  \node[shape=circle, draw=black, label=above:$1$] (t1) [right  = 2.1cm of ti] {$t_{1}$};
  \node[draw=none] (t1_term) [right  = 1.5cm of t1] {$(1,0,1)$};

  \node[draw=none] (dots_ni) [right  = 1cm of sn] {$\cdots$};
  \node[draw=none] (dots_i1) [right  = 1cm of ti] {$\cdots$};
  \path[-]
    (sn) edge (dots_ni)
    (ti) edge (dots_i1)
  ;
  \path[->]
    (dots_ni) edge (ti1)
    (dots_i1) edge (t1)

    (ti1) edge node[above=-.1cm] {\small \color{gray} $q_{i+1}'$} (si)
    (si) edge (ri)
    (ri) edge (ti)

    (t1) edge node[above=-.1cm] {\small \color{gray} $q_1'$} (t1_term)
  ;

  \node[draw=none] (sn_term) [below = 1.5cm of sn] {$(0,1-\hat{q}_n,0)$};
  \node[shape=circle, draw=black] (ti1_term) [below = 1.5cm of ti1] {$u_{i+1}$};
  \node[draw=none] (si_term) [below = 1.5cm of si] {$(0,1-\hat{q}_i,0)$};
  \node[draw=none] (ri_term) [below = 1.5cm of ri] {$(0,0,\hat{q}_i)$};
  \node[shape=circle, draw=black] (ti_term) [below = 1.5cm of ti] {$u_{i}$};
  \node[shape=circle, draw=black] (t1_term2) [below = 1.5cm of t1] {$u_{1}$};

  \path[->]
    (sn) edge (sn_term)
    (ti1) edge node[right] {\small \color{gray} $1 - q_{i+1}'$} (ti1_term)
    (si) edge (si_term)
    (ri) edge (ri_term)
    (ti) edge node[right] {\small \color{gray} $1 - q_i'$} (ti_term)
    (t1) edge node[right] {\small \color{gray} $1 - q_1'$} (t1_term2)
  ;
\end{tikzpicture}

  \caption{The game $\Gchance(p)$.}
  \label{fig:Gchance}
\end{figure}

The game $\Gchance(p)$ shown in Figure~\ref{fig:Gchance} consists of
non-terminal nodes $s_i,t_i,r_i$ and $u_i$, for $i=1,\dots,n$, with the initial
node being $s_n$. Player~1 has the role of implementing the chance node, whereas
Player~2 and Player~3 incentivize Player~1 to play using the probabilities
$q_1,\dots,q_n$ by means of threats. In nodes~$t_i$ Player~1 has the choice
between node~$u_i$, or when $i>1$ continuing in node~$s_{i-1}$ and when $i=1$
end in a terminal with rewards~$(1,0,1)$, corresponding to $\bot$. Before each
node~$t_i$, Player~2 and Player~3 are able to threat to end in terminals with
rewards~$(0,1-\hat{q}_i,0)$ and $(0,0,\hat{q}_i)$ from nodes $s_i$ and $t_i$,
respectively, where we define $\hat{q}_i$ by
\[
  \hat{q}_i = \prod_{j=1}^i q_j = \frac{1-\sum_{j=1}^n
    p_j}{1-\sum_{j=i+1}^n p_j} \enspace .
\]

\begin{lemma} \label{LEM:Gchance}
  Consider the game derived from $\Gchance(p)$ where each node $u_i$ is changed
  to be a terminal node with rewards~$(1,1,0)$. Then, play according to any
  stationary NE in which Player~1 receives payoff~1 reaches terminal $u_i$ with
  probability~$p_i$, for all~$i$.
\end{lemma}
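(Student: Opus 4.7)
The plan is to show that, in any stationary NE $\tau$ in which Player~1 receives payoff $1$, the continuation probability $q_i'$ played by Player~1 at node $t_i$ must coincide with the prescribed value $q_i$. Once this is established, the probability of reaching $u_i$ equals $(1-q_i')\prod_{j=i+1}^n q_j'=(1-q_i)\prod_{j=i+1}^n q_j=p_i$ by the telescoping computation already used to derive the $q_i$.

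The first step is to rule out threat terminals. In the modified game Player~1 receives reward $0$ at each threat terminal $(0,1-\hat{q}_i,0)$ and $(0,0,\hat{q}_i)$ and reward $1$ at every other terminal (the $u_i$ and the $(1,0,1)$ terminal reached from $t_1$). Since every play reaches a terminal and $U_1(\tau)=1$, no threat terminal can be reached with positive probability. Consequently, at every $s_i$ that is reached with positive probability Player~2 continues with probability~$1$, and likewise Player~3 continues at every reached $r_i$.

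Next I would write out the continuation payoffs from $s_i$ for Players~2 and~3 under~$\tau$. Denote these by $A_2(i)$ and $A_3(i)$. They satisfy $A_2(1)=1-q_1'$, $A_3(1)=q_1'$ and, for $i>1$, $A_2(i)=(1-q_i')+q_i'A_2(i-1)$ and $A_3(i)=q_i'A_3(i-1)$. A one-line induction then yields the identity $A_2(i)+A_3(i)=1$. The heart of the argument is to combine this with the best-reply conditions at $s_i$ and $r_i$, namely $A_2(i)\geq 1-\hat{q}_i$ and $A_3(i)\geq \hat{q}_i$: since the left-hand sides sum to $1$ and the right-hand sides also sum to $1$, both inequalities must hold with equality. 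Substituting $A_2(i)=1-\hat{q}_i$ and $A_3(i)=\hat{q}_i$ into the recurrences gives $q_1'=q_1$ in the base case and, for $i>1$, the equation $q_i'\hat{q}_{i-1}=\hat{q}_i=q_i\hat{q}_{i-1}$, from which $q_i'=q_i$ follows after noting that $\hat{q}_{i-1}>0$ by the assumption $\norm{p}_1<1$. Since then $q_i'=q_i>0$, whenever $s_i$ is reached with positive probability so is $s_{i-1}$, which makes the induction well-founded all the way down to $i=1$.

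I expect the main obstacle to be precisely the cancellation step in the previous paragraph: each of the best-reply inequalities at $s_i$ and at $r_i$ is one-sided and on its own leaves $q_i'$ underdetermined, and Player~1 herself is indifferent between $u_i$ and continuing. What saves the argument is the algebraic coincidence that Player~2 and Player~3 split a unit of payoff along the spine of the gadget, while the two threat values $1-\hat{q}_i$ and $\hat{q}_i$ also sum to exactly $1$; this forces both inequalities to bind and pins down each $q_i'$ uniquely. Everything else reduces to routine unwinding of the recurrences.
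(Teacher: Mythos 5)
Your proposal is correct and follows essentially the same route as the paper's proof: rule out the threat terminals because Player~1 gets payoff~$0$ there, observe that conditioned on reaching $s_i$ Players~2 and~3 split one unit of payoff (your $A_2(i)+A_3(i)=1$ is the paper's closed forms $1-\prod_{j\le i}q'_j$ and $\prod_{j\le i}q'_j$), and use the pair of one-sided best-reply inequalities against $1-\hat q_i$ and $\hat q_i$, which sum to~$1$, to force both to bind and pin down $q'_i=q_i$. The only cosmetic difference is that you establish reachability of the lower nodes by induction down the spine, whereas the paper notes that Player~3 only earns at $\bot$ and hence $\bot$ must be reached with positive probability; both work.
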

\begin{proof}
  For Player~1 to receive payoff~1, play must reach either one of the
  terminals~$u_i$ or~$\bot$ with probability~1, so no threat is executed by
  Player~2 and Player~3. Suppose Player~1 chooses node $u_i$ with probability
  $1-q'_i$, for every $i$. Since Player~3 only receives a positive reward
  in~$\bot$, play must reach $\bot$ with positive probability which means
  $q'_i>0$ for all~$i$. For a given~$i$ and conditioned on play reaching $s_i$,
  Player~2 receives payoff~$1-\prod_{j=1}^i q'_j$ and Player~3 receives payoff
  $\prod_{j=1}^i q'_j$. For Player~2 and Player~3 to not execute their threats
  in $s_i$ and $r_i$ it is required that $1-\prod_{j=1}^i q'_j \geq
  1-\prod_{j=1}^i q_j$ and $\prod_{j=1}^i q'_j \geq \prod_{j=1}^i q_j$, which
  implies $\prod_{j=1}^i q'_j = \prod_{j=1}^i q_j$. Since this must hold for
  all~$i$, we have $q'_i=q_i$ for all~$i$, and thus play reaches terminal $u_i$
  with probability~$p_i$ for all $i$.
\end{proof}

Using the construction above, we are able to replace the chance nodes in $\GS$
used to prove Theorem \ref{THM:RecursiveGamesETR}. The chance node $v_0$ and its
two immediate chance nodes can be combined into a single one with outgoing arcs
to $v_1, v_2, \dots, v_n$ in $\Gvar$ with probability $\tfrac{1}{4n}$ and arcs
to $\Gpoly(1), \dots, \Gpoly(\ell)$ with probability $\tfrac{1}{4\ell}$.
With the remaining probability of $\tfrac{1}{2}$ the chance node leads to a new
terminal $\bot_0$ where all $7$ original players of $\GS$ receive payoff $0$.
This modified chance node can be replaced with the gadget of
Lemma~\ref{LEM:Gchance}, which adds three new players to the construction. In
the terminals of all subgames (including the terminals added next), the first
two newly added players receive payoff $1$ while the third receives $0$.
Similarly, the chance node within $\Gpoly(k)$ can be replaced with a chance
node, that leads with probability $\tfrac{1}{2}$ to a terminal $\bot_k$ and with
probability $\tfrac{1}{2n^2}$ to $\Gmul(x_i,x_j, (1+a^{k}_{i,j})/2)$, for all
$i,j,k$. To compensate for this new terminal, the payoff in the threats by the
original sixth and seventh player is decreased to $\tfrac{1}{4n^2}$. Since each
$\Gpoly(k)$ is independent of another, these chance nodes can be replaced by
only adding another $3$ players, rather than $3\ell$. The first two of these
players gain payoff $1$ in all $\Gmul(i,j, (1+a^{k}_{i,j})/2)$ and the last
gains payoff $0$, while all three gain $0$ in the $\bot_0$ and in $\Gvar$.

We therefore obtain the following result for deterministic recursive games.

\begin{theorem} \label{THM:RecursiveGamesETR_withoutChances}
  It is \cETR-complete to decide whether for a given $m$-player deterministic
  recursive game $G$ and payoff demands $L\in\RR^m$ there exists a stationary NE
  $\tau$ with $U(\tau)\geq L$. The problem is \cETR-complete even for
  13-player acyclic deterministic recursive games with non-negative rewards. The
  same result holds for the analogous problem for stationary SPE.
\end{theorem}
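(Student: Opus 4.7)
The plan is to reduce from the \cETR-hard problem of Theorem~\ref{THM:RecursiveGamesETR} by transforming the 7-player acyclic game $\GS$ into a deterministic 13-player acyclic game with non-negative rewards, replacing every chance node of $\GS$ by an instance of the gadget $\Gchance(p)$ of Lemma~\ref{LEM:Gchance}. Membership of \cETR\ is inherited from the general case. Acyclicity is preserved because $\Gchance(p)$ is itself acyclic and no new cycles are introduced when its ``output'' nodes $u_i$ are identified with destinations in $\GS$.

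First I would consolidate the chance nodes of $\GS$ to minimize the number of added players. The initial chance node $v_0$, the chance node selecting among the $\Gpoly(k)$, and the uniform chance node inside $\Gvar$ are merged into a single chance node with outcomes: each $v_i$ of $\Gvar$ with probability $\tfrac{1}{4n}$, each $\Gpoly(k)$ with probability $\tfrac{1}{4\ell}$, and a fresh terminal $\bot_0$ with probability $\tfrac{1}{2}$, where all seven original players receive payoff $0$. Similarly, the chance node inside each $\Gpoly(k)$ is modified to lead to $\Gmul(i,j,(1+a^k_{ij})/2)$ with probability $\tfrac{1}{2n^2}$ each and to a fresh terminal $\bot_k$ with probability $\tfrac{1}{2}$; the rewards in the threats of Players~6 and~7 in $\Gpoly(k)$ are correspondingly halved to $\tfrac{1}{4n^2}$, so that the NE analysis from Lemma~\ref{LEM:G_poly-NE} and Corollary~\ref{COR:G_poly-NE} still goes through modulo this rescaling.

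Next I would replace each consolidated chance node by an instance of $\Gchance(p)$. The top-level chance node introduces three new Players~8,~9,~10; the chance nodes inside all $\Gpoly(k)$ can share a single second triple of new Players~11,~12,~13, since any play visits at most one $\Gpoly(k)$, yielding 13 players in total. I would extend the rewards so that at every terminal reached outside the internal threat-terminals of the $\Gchance$ gadgets, Players~8,~9,~11,~12 receive payoff $1$ and Players~10,~13 receive payoff $0$, matching the ``$(1,1,0)$ at $u_i$'' pattern required by Lemma~\ref{LEM:Gchance}. The payoff demands $L$ are then extended by demanding payoff $1$ for Players~8 and~11 (the ``Player~1'' roles of the two gadget copies), which by Lemma~\ref{LEM:Gchance} forces each gadget to route to its nominal outcomes with exactly the prescribed probabilities.

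The main obstacle will be to verify both directions of the reduction in the presence of the 6 new players simultaneously: showing that any stationary NE meeting the extended demands forces every gadget to behave as the chance node it replaces, so that the analysis of Theorem~\ref{THM:RecursiveGamesETR} transfers to the original seven players, and conversely that the NE constructed from a solution $x$ of $\calS$ extends to an NE of the new game without giving any new player an incentive to deviate inside a gadget. Care is needed to rescale the original entries of $L$ to account for the probability mass now absorbed by $\bot_0$ and the $\bot_k$, and to confirm the SPE claim, which should follow because the gadget itself is an SPE under the conditions of Lemma~\ref{LEM:Gchance}.
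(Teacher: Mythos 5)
Your overall architecture is exactly the paper's: consolidate the three top-level chance nodes into one that sends probability $\tfrac{1}{4n}$ to each $v_i$, probability $\tfrac{1}{4\ell}$ to each $\Gpoly(k)$, and probability $\tfrac{1}{2}$ to a fresh terminal $\bot_0$; rescale the chance node inside $\Gpoly(k)$ to put mass $\tfrac{1}{2}$ on a fresh $\bot_k$ and halve the threats of Players~6 and~7 to $\tfrac{1}{4n^2}$; and replace the two resulting chance nodes by instances of $\Gchance$ sharing two triples of new players (8--10 and 11--13), for 13 players in total. All of that matches the paper.

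The gap is in how you extend the rewards to the new players. You give Players~8, 9, 11, 12 payoff~1 and Players~10, 13 payoff~0 at \emph{every} terminal outside the gadgets' internal threat terminals. But $\bot_0$ and $\bot_k$ are precisely the $\bot$-terminal of the respective gadget, which in Lemma~\ref{LEM:Gchance} carries rewards $(1,0,1)$, not $(1,1,0)$: the gadget's second player must get~0 at $\bot$ and its third player must get~1 there. This asymmetry is what the gadget's incentives hinge on --- conditioned on reaching $s_i$, the second player's payoff is the probability of ending at some $u_j$ and the third player's is the probability of reaching $\bot$, and comparing these to the threat values $1-\hat{q}_i$ and $\hat{q}_i$ is what pins down $q'_i = q_i$. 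Under your assignment the third player of each gadget receives~0 at every non-threat terminal and so strictly prefers to execute its threat (payoff $\hat{q}_i>0$), meaning no stationary NE meets the demands even when $\calS$ is solvable; and the second player becomes indifferent to the routing probabilities, so even setting that aside, a NE would no longer certify correct simulation of the chance node. Relatedly, the paper gives the second triple payoff~0 in $\Gvar$ and at $\bot_0$ and payoff~1 only in the $\Gmul$ terminals, and accordingly demands only $\tfrac{1}{4}$ (not~1) for Player~11, since the second gadget is reached with probability $\tfrac{1}{4}$; your demand of~1 for Player~11 is consistent only with your (broken) reward assignment and is unsatisfiable under the paper's. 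The fix is local --- restrict the $(1,1,0)$ pattern to terminals downstream of the gadgets' $u_i$-outcomes, keep $(1,0,1)$ at $\bot_0$ and $\bot_k$, and set Player~11's demand to $\tfrac{1}{4}$ --- but as written the reduction fails.
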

\begin{proof}
  The result follows by similar argumentation as in the proof of
  Theorem~\ref{THM:RecursiveGamesETR} on the payoff vector
  $L = (
    \tfrac{1}{8},
    \tfrac{3}{8n},
    \tfrac{3}{8}(1-\tfrac{1}{n}),
    0,
    0,
    0,
    0,
    1,
    0,
    0,
    \tfrac{1}{4},
    0,
    0
  )$ together with Lemma~\ref{LEM:Gchance}.
\end{proof}

\subsection{Stationary NE where a Player Wins Almost Surely}
Theorem~\ref{THM:RecursiveGamesETR} is concerned with the existence of a
Stationary NE given a payoff demand, and was proven using a payoff demand $L$
that is non-zero for more than one player on a game \GS\ with fractional rewards
in $[0,1]$. In applications of verification and synthesis it is of interest to
discern whether there exists a Nash equilibria, where a single player can expect
payoff $1$ in a game with boolean terminal rewards in $\{0,1\}$; that is, where
a player is \emph{almost surely winning}.

Membership in \cETR\ still follows by the same argumentation as at the beginning
of Section~\ref{SEC:ETR-Completeness}, since the payoff demands is $1$ for the
one player of interest and zero for all others.

\begin{figure}[ht!]
  \centering
  \begin{tikzpicture}[shorten >= 1, node distance = 2cm, on grid, minimum size = 0.8cm]
  
  \node[shape=circle, draw=black, label=above:$1$, label=left:$\rightarrow$] (threat1) {$t_1$};
  \node[shape=circle, draw=black, label=above:$2$] (threat2) [right  = 1.6cm of threat1] {$t_2$};
  \node[shape=circle, draw=black, label=above:$3$] (threat3) [right  = 1.6cm of threat2] {$t_3$};

  \node[draw=none] (game_reduction) [right = 2cm of threat3] {$\GS$};
  \node[draw=none] (game_no_NE) [below  = 1.3cm of threat2] {$\left( \frac{1}{2}, \frac{1}{n}, \frac{n-1}{n}, 0, 0, 0, 0, 0 \right)$};

  \draw[->]
    (threat1) edge (threat2)
              edge [bend left=20] (game_no_NE)
    (threat2) edge (threat3)
              edge (game_no_NE)
    (threat3) edge (game_reduction)
              edge [bend right=20] (game_no_NE)
  ;
\end{tikzpicture}

  \caption{The game $\Gsure(\calS)$.}
  \label{fig:Gsure}
\end{figure}

Consider the game $\Gsure(\calS)$ in Figure~\ref{fig:Gsure}, where Player~1,
Player~2, and Player~3 can choose to not continue into the game $\GS$ used in
the proof of Theorem~\ref{THM:RecursiveGamesETR}, but instead end the game early
at a terminal with payoff $L$ of Remark~\ref{REM:First3Entries}. An eighth
player is added, who always gains payoff $1$ in $\GS$ but only payoff $0$ at the
newly added terminal. Since this construction only consists of non-negative
fractional terminal rewards, then one may replace all terminals with chance
nodes that lead to terminal rewards in $\{0,1\}$ without altering the expected
payoff. We then obtain the following extension of
Theorem~\ref{THM:RecursiveGamesETR}.

\begin{theorem} \label{THM:RecursiveGamesETR_surelyWinning}
  It is \cETR-complete to decide whether for a given $m$-player recursive game
  $G$, in which all rewards are~0 or~1, and a given~$k$, there exists a
  stationary NE in which Player~$k$ is almost surely winning. The problem is
  \cETR-complete even for acyclic 8-player recursive games. The same result
  holds for the analogous problem for stationary SPE.
\end{theorem}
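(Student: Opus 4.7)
The plan is to reduce from the problem solved in Theorem~\ref{THM:RecursiveGamesETR}, in the form noted in Remark~\ref{REM:First3Entries}, and then apply a standard rewriting to turn fractional terminal rewards into boolean ones. Membership in $\cETR$ follows exactly as for Theorem~\ref{THM:RecursiveGamesETR}, since the condition ``$\tau$ is a stationary NE satisfying $U_k(\tau)=1$'' is expressible as an existential first-order formula over the reals.

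For hardness I would analyze the game $\Gsure(\calS)$ of Figure~\ref{fig:Gsure}. The key observation is that Player~8 is almost surely winning if and only if play reaches the subgame $\GS$ with probability~$1$, which happens if and only if each of Players~1, 2, 3 chooses with probability~$1$ to continue into $\GS$ at their respective gateway nodes $t_1, t_2, t_3$ rather than divert to the terminal with reward vector $(\tfrac12, \tfrac1n, \tfrac{n-1}{n}, 0, 0, 0, 0, 0)$. For none of these three players to have incentive to divert, Player~$i$ must obtain expected payoff at least the $i$-th coordinate of that vector inside $\GS$. This is precisely the reduced payoff demand of Remark~\ref{REM:First3Entries}, so by Theorem~\ref{THM:RecursiveGamesETR} an appropriate stationary NE in $\GS$ exists if and only if $\calS$ has a solution in $\Simplex^{n-1}$. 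In the forward direction, given any such NE in $\GS$, extending it by letting Players~1, 2, 3 continue with probability~$1$ at $t_1, t_2, t_3$ yields a stationary NE of $\Gsure(\calS)$ in which Player~8 wins almost surely.

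The final step converts fractional terminal rewards to boolean ones. Every terminal reward vector in $\Gsure(\calS)$ lies in $[0,1]^8$ with rational entries and can therefore be written as a rational convex combination of vectors in $\{0,1\}^8$; replacing each terminal with a chance node distributing over the corresponding boolean-reward terminals preserves all expected payoffs, keeps the game acyclic, and introduces no new player nodes, so the set of stationary NE (and the almost-sure-winning condition for Player~8) is unchanged. The resulting game has $8$ players, is acyclic, and has rewards in $\{0,1\}$. The statement about stationary SPE follows from the same construction, noting that the NE produced in the forward direction of Theorem~\ref{THM:RecursiveGamesETR} is in fact already an SPE.

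The main obstacle I anticipate is the bookkeeping in the first direction: carefully verifying that in any stationary NE of $\Gsure(\calS)$ in which Player~8 wins almost surely, the strategy profile induced on $\GS$ is itself a stationary NE meeting the demand from Remark~\ref{REM:First3Entries}. This uses the fact that the $\GS$-subgame is reached with probability~$1$, so that any profitable deviation of a player inside $\GS$ would lift to a profitable deviation in the enlarged game. Once this is in hand, the equivalence with solvability of $\calS$ over $\Simplex^{n-1}$ is immediate from Theorem~\ref{THM:RecursiveGamesETR}.
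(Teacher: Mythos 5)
Your proposal is correct and follows essentially the same route as the paper: analyse $\Gsure(\calS)$, observe that Player~8 wins almost surely iff none of Players~1, 2, 3 divert at $t_1,t_2,t_3$, note that this forces the induced profile on $\GS$ to be a stationary NE meeting the reduced demand of Remark~\ref{REM:First3Entries}, and finish by converting fractional rewards to $\{0,1\}$ via chance nodes. The paper's own proof is just a terser version of the same argument, including the ``obstacle'' you flag (lifting deviations between $\GS$ and $\Gsure(\calS)$, which works because $\GS$ is reached with probability~1), so no changes are needed.
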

\begin{proof}
  In a stationary NE of $\Gsure(\calS)$, Player~8 is almost surely winning if
  and only if neither Player~1, Player~2, nor Player~3 execute their threat to
  end the game early. Their expected payoff for executing the threat is
  respectively $\tfrac{1}{2}$, $\tfrac{1}{n}$, and $\tfrac{n-1}{n}$, which also
  is their expected payoff in the proof of Theorem~\ref{THM:RecursiveGamesETR}.
  That is, the three players effectively enforce the payoff demand $L$ of
  Remark~\ref{REM:First3Entries} to $\GS$.
\end{proof}

\subsection{Stationary NE without Payoff Demands}
Theorem~\ref{THM:RecursiveGamesETR} settles the complexity of deciding existence
of stationary NE satisfying payoff demands. While deciding the mere existence of
any stationary NE may seem to be an easier problem, we show that is exactly as
hard.

\cETR-membership again follows as in Section~\ref{SEC:ETR-Completeness} by the
proof of Ummels and Wojtczak for
\PSPACE-membership~\cite[Theorem~7]{CONCUR:UmmelsW2011}. One will notice, that
the proof in \cite{arXiv:UmmelsW2011} has the payoff demand expressed in a
separate clause. One may omit this to the only express existence of a stationary
NE.

Suppose that we have an $m$-player gadget game $\GnoNE$ which does not have a
stationary NE and Player~1, 2, and 3 receive payoff~$0$ for \emph{all} strategy
profiles of the players; examples of such gadgets will be elaborated below. Let
$L$ be given by Remark~\ref{REM:First3Entries}. Construct now the game
$\GexistsNE(\calS)$ shown in Figure~\ref{fig:GexistsNE}, where similar to
$\Gsure(\calS)$ the first three players can choose not to go into $\GS$ but into
a different subgame. On this alternative path, a chance node $t_4$ leads with
probability $\tfrac{1}{2}$ to a terminal, which has twice the payoff demand $L$.
In the other case, the chance node leads to $\GnoNE$.

\begin{figure}[ht]
  \centering

  \begin{tikzpicture}[shorten >= 1, node distance = 2cm, on grid, minimum size = 0.8cm]
  
  \node[shape=circle, draw=black, label=above:$1$, label=left:$\rightarrow$] (threat1) {$t_1$};
  \node[shape=circle, draw=black, label=above:$2$] (threat2) [right  = 1.6cm of threat1] {$t_2$};
  \node[shape=circle, draw=black, label=above:$3$] (threat3) [right  = 1.6cm of threat2] {$t_3$};

  \node[shape=diamond, draw=black] (chance) [below  = 1.2cm of threat2] {$t_4$};

  \node[draw=none] (game_reduction) [right  = 2cm of threat3] {$\GS$};
  \node[draw=none] (game_no_NE) [left  = 1.7cm of chance] {$\GnoNE$};
  \node[draw=none] (chance_term) [right  = 2.8cm of chance] {$2 \cdot \left( \frac{1}{2}, \frac{1}{n}, \frac{n-1}{n}, 0, \dots, 0 \right)$};

  \draw[->]
    (threat1) edge (threat2)
              edge [bend left=20] (chance)
    (threat2) edge (threat3)
              edge (chance)
    (threat3) edge (game_reduction)
              edge [bend right=20] (chance)

    (chance) edge node[below] {$\frac{1}{2}$} (game_no_NE)
             edge node[below] {$\frac{1}{2}$} (chance_term)
  ;
\end{tikzpicture}


  \caption{The game $\GexistsNE(\calS)$}
  \label{fig:GexistsNE}
\end{figure}

\begin{lemma} \label{LEM:GexistsNE}
  The game $\GexistsNE(\calS)$ has a stationary NE if and only if there is a
  stationary NE of $\GS$ satisfying the payoff demands~$L$.
\end{lemma}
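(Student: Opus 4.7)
My plan is a two-direction argument driven by the single key observation that in any stationary NE of \GexistsNE(\calS), the chance node $t_4$ must be reached with probability zero; otherwise the restriction of the profile to \GnoNE\ would itself be a stationary NE of \GnoNE, contradicting our standing hypothesis on \GnoNE. This pins down the behavior at $t_1,t_2,t_3$ and lets me translate between equilibria of \GexistsNE(\calS) and \GS.

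For the forward direction I would assume $\tau$ is a stationary NE of \GexistsNE(\calS) and first establish the reachability observation above: if $t_4$ were reached with positive probability $p>0$, any profitable deviation inside \GnoNE\ (which exists because \GnoNE\ admits no stationary NE) would scale up by $p$ to a profitable global deviation for the same player, contradiction. Hence $t_1$, $t_2$, $t_3$ each choose to continue into \GS\ with probability $1$, and the induced stationary profile $\tau'$ in \GS\ is a stationary NE of \GS\ (since any profitable deviation in \GS\ would be a profitable deviation in \GexistsNE(\calS)). It remains to verify the payoff demand $L$: because deviating at $t_1$, $t_2$, or $t_3$ towards $t_4$ must not be profitable, and because Players $1,2,3$ receive payoff $0$ throughout \GnoNE, the expected payoffs for these three players from routing to $t_4$ are, respectively, $\tfrac12\cdot 1=\tfrac12$, $\tfrac12\cdot \tfrac{2}{n}=\tfrac{1}{n}$, and $\tfrac12\cdot \tfrac{2(n-1)}{n}=\tfrac{n-1}{n}$, which exactly matches the first three coordinates of $L$. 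By Remark~\ref{REM:First3Entries} this is enough.

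For the backward direction I would take a stationary NE $\tau'$ of \GS\ satisfying $L$ and extend it to a stationary profile $\tau$ on \GexistsNE(\calS) by (i) playing $\tau'$ inside \GS, (ii) letting $t_1,t_2,t_3$ each proceed to \GS\ with probability $1$, and (iii) fixing any stationary profile inside \GnoNE\ (it is irrelevant since \GnoNE\ is unreachable). The only non-trivial check is that no player profits from a deviation; deviations inside \GS\ are ruled out by $\tau'$ being a NE, deviations inside \GnoNE\ have global weight $0$, and deviations at $t_1,t_2,t_3$ just reroute expected payoff to the value $L$ at $t_4$, which is no more than what the deviating player already receives in \GS\ by the payoff demand. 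The main conceptual obstacle is really the forward direction: one must be careful that reaching \GnoNE\ with positive probability genuinely forces the induced strategies there to be a stationary NE of \GnoNE\ (i.e.\ that "a deviation available inside the subgame translates to a global deviation"), which relies on the strategies being stationary and on the subgame being entered with positive probability.
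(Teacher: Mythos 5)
Your proposal is correct and follows essentially the same route as the paper's (much terser) proof: the core observation that $t_4$ cannot be reached with positive probability in a stationary NE because $\GnoNE$ admits none, and the use of the threat values $\tfrac12,\tfrac1n,\tfrac{n-1}{n}$ at $t_1,t_2,t_3$ to enforce the demand $L$ of Remark~\ref{REM:First3Entries}, are exactly the paper's two steps. You simply spell out the details (the scaling of a profitable deviation inside $\GnoNE$ by the entry probability, and the explicit payoff computation at the terminal $2L$) that the paper delegates to the proof of Theorem~\ref{THM:RecursiveGamesETR_surelyWinning}.
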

\begin{proof}
  Since $\GnoNE$ does not permit a stationary NE, the game $\GexistsNE(\calS)$
  has a NE if and only if none of Player~1, Player~2, and Player~3 execute their
  threat to go to $t_4$. Similarly to the proof of
  Theorem~\ref{THM:RecursiveGamesETR_surelyWinning}, the three players enforce
  the payoff demand $L$ of Remark~\ref{REM:First3Entries} to $\GS$.
\end{proof}

Boros and Gurvich~\cite{MSS:BorosG2003} and
Kuipers~et~al.~\cite{EJOR:KuipersFSV2009} (cf.\ \cite[Proposition
3.3]{Ummels2011}) construct a (cyclic) 3-player recursive game with non-negative
rewards which has no stationary NE. We may let Player~4, 5, and~6 take the role
of playing in this game, letting Player~1, 2, and~3 receive reward~0 in all
terminals. Together with Theorem~\ref{THM:RecursiveGamesETR} we obtain the
following result.

\begin{theorem} \label{THM:RecursiveGamesETR_withoutPayoff}
  It is \cETR-complete to decide whether a given $m$-player recursive game has
  a stationary NE, even for 7-player recursive games with non-negative rewards.
\end{theorem}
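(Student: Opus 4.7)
The plan is to combine Lemma~\ref{LEM:GexistsNE} with Theorem~\ref{THM:RecursiveGamesETR} and a small gadget game admitting no stationary NE, so that the payoff demands of Theorem~\ref{THM:RecursiveGamesETR} are enforced implicitly through the requirement that a stationary NE exists at all.

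For \cETR-membership I would argue exactly as at the start of Section~\ref{SEC:ETR-Completeness}: Ummels and Wojtczak's construction gives an existential first-order formula over $\RR$ expressing that a given stationary profile $\tau$ is a NE (via optimality in each player's MDP when the opponents' strategies are fixed). Dropping the separate clause that encodes the payoff demand yields an $\ETR$ formula whose truth is equivalent to existence of any stationary NE, establishing \cETR-membership.

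For \cETR-hardness I reduce from the problem of Theorem~\ref{THM:RecursiveGamesETR}: given a system $\calS$ of homogeneous quadratic polynomials, decide whether the 7-player acyclic non-negative recursive game $\GS$ has a stationary NE satisfying the payoff demand $L = (\tfrac{1}{2},\tfrac{1}{n},\tfrac{n-1}{n},0,0,0,0)$ of Remark~\ref{REM:First3Entries}. I take as $\GnoNE$ the 3-player cyclic recursive game with non-negative rewards and no stationary NE constructed by Boros and Gurvich~\cite{MSS:BorosG2003} and by Kuipers et al.~\cite{EJOR:KuipersFSV2009} (cf.~\cite[Prop.~3.3]{Ummels2011}). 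Crucially, I assign control of $\GnoNE$ to Players~4, 5, and~6, reusing three of the seven players already present in $\GS$, and I set the rewards of Players~1, 2, and~3 to~$0$ at every terminal of $\GnoNE$. This makes $\GnoNE$ a gadget in which Players~1--3 receive payoff~$0$ for every strategy profile, which is precisely the hypothesis required in the statement preceding Lemma~\ref{LEM:GexistsNE}. Plugging this $\GnoNE$ into the construction of $\GexistsNE(\calS)$ from Figure~\ref{fig:GexistsNE} gives a 7-player recursive game with non-negative rewards, and by Lemma~\ref{LEM:GexistsNE} it has a stationary NE if and only if $\GS$ has a stationary NE satisfying $L$, which by Theorem~\ref{THM:RecursiveGamesETR} is \cETR-hard to decide.

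The only subtle points I expect are bookkeeping: first, confirming that the Boros--Gurvich/Kuipers gadget can be instantiated so that the three ``extra'' players of $\GS$ (Players~4, 5, 6) are precisely the active players of $\GnoNE$, without introducing any cross-interaction that would disturb the analyses of $\Gvar$, $\Gmul$, and $\Gpoly(k)$; and second, checking that augmenting the rewards of Players~1--3 to~$0$ on all terminals of $\GnoNE$ preserves the no-stationary-NE property (it does, since the strategic structure among Players~4--6 is untouched). The reduction is then complete with exactly $7$ players, matching the bound claimed in the theorem.
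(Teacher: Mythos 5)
Your proposal is correct and follows essentially the same route as the paper: $\exists\mathbb{R}$-membership by dropping the payoff-demand clause from the Ummels--Wojtczak formula, and hardness by instantiating $\GnoNE$ with the Boros--Gurvich/Kuipers 3-player non-negative recursive game played by Players~4--6 (with Players~1--3 receiving reward~0 there), then invoking Lemma~\ref{LEM:GexistsNE} and Theorem~\ref{THM:RecursiveGamesETR}. The bookkeeping points you flag (no cross-interaction since $\GnoNE$ sits on a disjoint branch, and padding with zero rewards preserves the no-NE property) are exactly the implicit checks in the paper's argument.
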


\begin{figure}[ht]
  \centering

  \begin{tikzpicture}[shorten >= 1, node distance = 2cm, on grid, minimum size = 0.8cm]
  
  \node[shape=circle, draw=black, label=above:$1$, label=left:$\rightarrow$] (threat1)
    {$t_1$};
  \node[shape=circle, draw=black, label=above:$2$] (threat2)
    [above left = 1.4cm and 0.5cm of threat1] {$t_2$};
  \node[shape=circle, draw=black, label=above:$3$] (threat3)
    [above right = 1cm and 1.2cm of threat2] {$t_3$};
  \node[shape=circle, draw=black, label=above:$8$] (threat8)
    [below right = 1cm and 1.2cm of threat3] {$t_8$};
  \node[shape=circle, draw=black, label=above:$11$] (threat11)
    [below left = 1.4cm and 0.5cm of threat8] {$t_{11}$};

  \node[draw=none] (game_reduction) [right = 1.5cm of threat11] {$\GS'$};
  \node[draw=none] (game_no_NE) [below  = 1.3cm of threat3] {$\GnoNE$};

  \draw[->]
    (threat1)  edge[bend left=10] (threat2)
               edge (game_no_NE)
    (threat2)  edge[bend left=10] (threat3)
               edge (game_no_NE)
    (threat3)  edge[bend left=10] (threat8)
               edge (game_no_NE)
    (threat8)  edge[bend left=10] (threat11)
               edge (game_no_NE)
    (threat11) edge (game_reduction)
               edge (game_no_NE)
  ;
\end{tikzpicture}


  \caption{The game $\GexistsNE'(\calS)$}
  \label{fig:GexistsNE'}
\end{figure}

In continuation of Section~\ref{SEC:DeterministicGames} we would like to
dispense with the chance node $t_4$ to thereby combine
Theorem~\ref{THM:RecursiveGamesETR_withoutChances} with
Theorem~\ref{THM:RecursiveGamesETR_withoutPayoff}. We thus consider the game in
Figure~\ref{fig:GexistsNE'}, where $t_4$ has been removed in favor of going
directly to $\GnoNE$ and a threat is added for the chance node implemented by
Player~$8$ and $11$. Unlike above, since play never reaching a terminal results
in payoff~0, then it is not possible to guarantee positive payoffs in the
$\GnoNE$. Instead, let $\GS'$ be the game obtained from $\GS$ of
Theorem~\ref{THM:RecursiveGamesETR_withoutChances} where all terminal rewards of
Player~1, 2, 3, 8, and~11 have been decreased by $\tfrac{1}{8}, \tfrac{3}{8n},
\tfrac{3n-3}{8n}, 1,$ and $\tfrac{1}{4}$, respectively. Since the game is
acyclic, and hence reaches a terminal with probability~1, this does not change
the NE of the game, but just subtracts $\tfrac{1}{8}, \tfrac{3}{8n},
\tfrac{3n-3}{8n}, 1,$ and $\tfrac{1}{4}$ from the NE payoffs of Player~1, 2, 3,
8, and~11, respectively.

Boros~et~al.~\cite{DAM:BorosGMOV18} recently constructed a \emph{deterministic}
3-player recursive game without a stationary NE. As above, we can let Player~4,
5, and~6 take the role of playing in this game, letting Player~1, 2, and~3
receive reward~0 in all terminals. Repeating the arguments in the proof of
Lemma~\ref{LEM:GexistsNE} and
Theorem~\ref{THM:RecursiveGamesETR_withoutChances} we obtain the following
result.
\begin{theorem} \label{THM:RecursiveGamesETR_withoutPayoff_withoutChances}
  It is \cETR-complete to decide whether a given $m$-player deterministic
  recursive game has a stationary NE, even for $m=13$.
\end{theorem}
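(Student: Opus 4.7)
The plan is to combine Theorem~\ref{THM:RecursiveGamesETR_withoutChances} with the construction in Figure~\ref{fig:GexistsNE'}, using the deterministic no-stationary-NE game of Boros~et~al.~\cite{DAM:BorosGMOV18} in place of a stochastic no-NE gadget. \cETR-membership follows exactly as in the proof of Theorem~\ref{THM:RecursiveGamesETR_withoutPayoff}, dropping the payoff-demand clause from Ummels and Wojtczak's formula.

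For \cETR-hardness, let $\GS$ denote the 13-player acyclic deterministic game of Theorem~\ref{THM:RecursiveGamesETR_withoutChances}, which admits a stationary NE meeting $L = (\tfrac{1}{8}, \tfrac{3}{8n}, \tfrac{3}{8}(1-\tfrac{1}{n}), 0, 0, 0, 0, 1, 0, 0, \tfrac{1}{4}, 0, 0)$ iff $\calS$ has a solution in $\Simplex^{n-1}$. Only Players~1, 2, 3, 8, and~11 carry a nonzero demand; I would form $\GS'$ by subtracting the matching demand from every terminal reward of each of those five players. Since the game is acyclic, every play reaches a terminal with probability~1, so this is a constant shift of each of these five players' expected payoffs. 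In particular the set of stationary NE is preserved, and in every strategy profile of $\GS'$ each of the five players receives non-negative payoff precisely when the original demand in $\GS$ was met.

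Next, I would take the deterministic 3-player recursive game of Boros~et~al.~\cite{DAM:BorosGMOV18} as $\GnoNE$, assigning its three roles to Players~4, 5, and~6 (whose demand in $L$ is already zero) and giving every other player payoff~$0$ in all its terminals. I would then build $\GexistsNE'(\calS)$ as in Figure~\ref{fig:GexistsNE'}: a chain of threat nodes $t_1, t_2, t_3, t_8, t_{11}$ controlled by Players~1, 2, 3, 8, 11, each of which either advances along the chain or diverts play into a shared copy of $\GnoNE$; the final node $t_{11}$ passes control into $\GS'$ when its owner does not divert. The resulting game is deterministic and has exactly 13 players.

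For the equivalence, any stationary NE of $\GexistsNE'(\calS)$ must not execute any threat with positive probability, for otherwise the induced strategies of Players~4, 5, 6 inside the reached copy of $\GnoNE$ would form a stationary NE there, contradicting the choice of $\GnoNE$. Hence play is forced into $\GS'$, and for each of the five threat players not to prefer diverting (which pays~$0$) they must receive payoff $\geq 0$ in $\GS'$, equivalently their original $L$-demand is met in $\GS$; by Theorem~\ref{THM:RecursiveGamesETR_withoutChances} this is equivalent to $\calS$ having a solution on $\Simplex^{n-1}$. The converse extends any $L$-meeting NE of $\GS$ to one of $\GexistsNE'(\calS)$ by fixing arbitrary stationary strategies for Players~4, 5, 6 in the unreached copy of $\GnoNE$. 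The only delicate point I expect is the bookkeeping to keep the player count at exactly $13$ rather than $16$: this works precisely because the three players implementing $\GnoNE$ can be chosen among those having zero demand in the underlying 13-player construction.
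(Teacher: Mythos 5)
Your proposal is correct and follows essentially the same route as the paper's own argument: it forms $\GS'$ by subtracting the five nonzero demand components of $L$ from the corresponding players' terminal rewards (valid because the acyclic game reaches a terminal with probability~1), prepends the chain of threat nodes for Players~1, 2, 3, 8, and~11 leading into the deterministic no-stationary-NE game of Boros~et~al.\ played by Players~4, 5, and~6, and thereby keeps the player count at~13. The only difference is that you spell out the equivalence argument in more detail than the paper, which merely says to repeat the arguments of Lemma~\ref{LEM:GexistsNE} and Theorem~\ref{THM:RecursiveGamesETR_withoutChances}.
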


\subsection{\texorpdfstring{$\omega$}{ω}-Regular Objectives and Mean-Payoff Games}
\label{SEC:BorelMeanPayoff}
Ummels and Wojtczak proved membership of \PSPACE\ by giving reductions to \ETR\
for the problem of deciding existence of a stationary NE meeting given payoff
constraints in several classes of games. For perfect information stochastic games
where all players have Streett or Rabin objectives (called Streett-Rabin games),
or where all players have Muller objectives a non-determistic polynomial time
many-one reduction to \ETR\ is constructed~\cite{LMCS:UmmelsW2011}. For perfect
information mean-payoff games a (deterministic) polynomial time many-one reduction
to \ETR\ is constructed~\cite{CONCUR:UmmelsW2011}. Using the characterization of
\cETR\ in terms of nondeterministic Blum-Shub-Smale machines it is straightforward
that the many-one reductions may be combined with decision of $\ETR$, thereby
proving \cETR-membership for the problems of deciding existence of NE meeting given
payoff constraints.

Street-Rabin games generalize reach-a-set games and stay-in-a-set games where
all objectives are terminal. One may prove \cETR-membership for general
reach-a-set and stay-in-a-set games in a similar way as Ummels and Wojtczak
did.
\begin{theorem}
  It is \cETR-complete to decide whether a given $m$-player perfect
  information reach-a-set game has a stationary NE, even for $m=7$.
\end{theorem}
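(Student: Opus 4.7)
The plan is to establish \cETR-membership and \cETR-hardness separately, following the overall template of Theorem~\ref{THM:RecursiveGamesETR_withoutPayoff}.

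For \cETR-membership, I would reuse the argument sketched at the start of Section~\ref{SEC:BorelMeanPayoff}: reach-a-set games fit inside the Streett-Rabin framework of Ummels and Wojtczak, and a stationary strategy profile $\tau$ fixes a finite Markov chain in which each reachability probability $\Pr_\tau[\Reach(S_i)]$ is the unique solution of a linear system whose coefficients are polynomials in $\tau$. Nash-equilibrium conditions then become polynomial inequalities in $\tau$ and auxiliary value variables, so existence of a stationary NE can be expressed by an $\ETR$ sentence of polynomial size.

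For \cETR-hardness I would reduce from Theorem~\ref{THM:RecursiveGamesETR_withoutPayoff}, which provides a 7-player recursive game $G$ with non-negative terminal rewards for which deciding existence of a stationary NE is \cETR-complete. After normalizing all terminal rewards to lie in $[0,1]^7$, I replace each terminal $v$ of $G$, with reward vector $u(v)=(u_1(v),\dots,u_7(v))$, by a fresh chance node $c_v$ having $2^7 = 128$ out-neighbours $\{t_S : S \subseteq \{1,\dots,7\}\}$, where the arc $c_v \to t_S$ carries probability $\prod_{i \in S} u_i(v)\prod_{i \notin S}(1-u_i(v))$. Declaring the target set of Player~$i$ to be $S_i := \{t_S : i \in S\}$ converts the modified game into a reach-a-set game $G'$ on $7$ players; no new players are introduced, and the cyclic subgame $\GnoNE$ and its role in the $\GexistsNE$-style construction are preserved.

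The key verification is that $G$ and $G'$ have identical sets of stationary NE. The new chance nodes carry no strategic choice, so the players' strategy spaces agree in $G$ and $G'$. Under any stationary profile $\tau$, the probability of reaching $S_i$ in $G'$ conditional on play entering $c_v$ equals $\sum_{S \ni i}\prod_{j \in S} u_j(v)\prod_{j \notin S}(1-u_j(v)) = u_i(v)$, so for each player and each node of $G$, the expected payoff in $G'$ matches the expected terminal reward in $G$; hence $\tau$ is a stationary NE of $G'$ iff it is one of $G$. I expect the construction itself to be mechanical, since with $m=7$ fixed the per-terminal gadget has constant size and the reduction runs in polynomial time; the main subtlety will be to verify explicitly that this terminal-level replacement leaves the cyclic $\GnoNE$ gadget and the NE-bijection argument underlying Lemma~\ref{LEM:GexistsNE} intact, which however follows because only terminal nodes are altered.
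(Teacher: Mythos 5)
Your proof is correct and follows essentially the same route as the paper: \cETR-membership via the Ummels--Wojtczak ETR encoding, and hardness by reducing from Theorem~\ref{THM:RecursiveGamesETR_withoutPayoff}, normalizing the non-negative terminal rewards to $[0,1]$ and splitting each terminal into $\{0,1\}$-reward terminals via a chance node, so that the game becomes a $7$-player reach-a-set game with an unchanged set of stationary NE. The only difference is that you spell out an explicit product-distribution gadget with $2^7$ outcomes per terminal, whereas the paper simply invokes the general convex-combination conversion it already set up in the preliminaries.
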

\begin{proof}
  Recursive games with non-negative rewards may, after normalizing rewards to
  $[0,1]$, be viewed as a special case of reach-a-set games. The result then
  follows from Theorem~\ref{THM:RecursiveGamesETR_withoutPayoff}.
\end{proof}

\begin{theorem}
  It is \cETR-complete to decide whether a given $m$-player perfect
  information stay-in-a-set game has a stationary NE, even for $m=7$.
\end{theorem}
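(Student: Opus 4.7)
The plan is to follow the template of Theorem~\ref{THM:RecursiveGamesETR_withoutPayoff} and the reach-a-set variant just above. Membership in \cETR\ I would establish as in Section~\ref{SEC:BorelMeanPayoff}: expressing a stationary NE via each player's MDP optimality conditions against the fixed opponent profile yields an existential first-order sentence over the reals, and the argument of Ummels--Wojtczak carries over to safety objectives without difficulty.

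For \cETR-hardness my first step is to realise the game $\GS$ from Theorem~\ref{THM:RecursiveGamesETR} as a stay-in-a-set game. Since the terminal reward vectors of $\GS$ lie in $[0,1]^7$, I first replace each terminal by a chance node routing to auxiliary terminals with reward vectors in $\{0,1\}^7$, which preserves expected payoffs. For the resulting boolean recursive game I set each player's safety set $S_i$ to consist of every non-terminal node together with the terminals giving Player~$i$ reward~$1$. Because every play of the (acyclic) $\GS$ reaches a terminal, staying in $S_i$ forever then coincides exactly with reaching a terminal giving Player~$i$ reward~$1$, so stay-in-a-set payoffs agree with the recursive payoffs and stationary NE are preserved.

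To eliminate the payoff demand I would plug this stay-in-a-set version of $\GS$ into the composition $\GexistsNE(\calS)$ of Figure~\ref{fig:GexistsNE}, using for $\GnoNE$ a $3$-player stay-in-a-set game that admits no stationary NE. Its three players can be taken to be Players~$4,5,6$ of $\GS$, with safety sets extended so that they surely leave their stay sets on every play routed into the $\GS$ portion or into the alternative terminal, while the stay sets of Players~$1,2,3$ exclude every node of $\GnoNE$. The fractional target payoff $2L$ at the alternative terminal is again realised by splitting it through a chance node into boolean-reward terminals. The analysis of Lemma~\ref{LEM:GexistsNE} then transfers verbatim: a stationary NE exists if and only if none of Players~$1,2,3$ executes a threat, if and only if $\GS$ admits a stationary NE meeting the payoff demand $L$ of Remark~\ref{REM:First3Entries}.

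The principal obstacle is locating and citing a suitable $3$-player stay-in-a-set gadget $\GnoNE$ without a stationary NE, since such gadgets are delicate and, as flagged in the introduction, have only recently been constructed and themselves rely on the \cETR-hardness we have proved for acyclic games. Once $\GnoNE$ is in hand, the remaining work is routine bookkeeping to coordinate the safety sets of the gadget's players with those of the seven players of $\GS$, preserving both the no-stationary-NE property of $\GnoNE$ and the reduction argument from Theorem~\ref{THM:RecursiveGamesETR}.
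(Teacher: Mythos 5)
Your overall architecture (realise $\GS$ as a stay-in-a-set game, compose it with a no-stationary-NE stay-in-a-set gadget via $\GexistsNE(\calS)$, and reuse Lemma~\ref{LEM:GexistsNE}) is the right one, and indeed matches the paper's, which performs the reduction by dividing all rewards by~$2$ and subtracting~$1$ so that the whole of $\GexistsNE(\calS)$ becomes a recursive game with rewards in $[-1,0]$, i.e.\ a stay-in-a-set game with terminal safety objectives --- a global affine change rather than your per-player positive encoding. But there are two genuine problems with your write-up. First, the gadget you flag as ``the principal obstacle'' is not a side issue: it is the entire new content of this theorem relative to Theorem~\ref{THM:RecursiveGamesETR_withoutPayoff}, and without exhibiting or citing a concrete stay-in-a-set game with no stationary NE the proof does not exist. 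The paper supplies it by citing the $2$-player such game of Hansen and Raskin (GandALF 2019), letting Players~4 and~5 play its two roles.

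Second, and independently of the missing gadget, your embedding of the gadget's players is wrong as stated. You identify the gadget's players with Players~$4,5,6$ of $\GS$ and then extend their safety sets ``so that they surely leave their stay sets on every play routed into the $\GS$ portion.'' That forces Players~$4$, $5$ and~$6$ to receive payoff~$0$ on every play through $\GS$, which destroys the reduction inside $\GS$: the threats of Players~4 and~5 in $\Gmul(i,j,\alpha)$ are exactly what force $x'_j=x_j$ (Lemma~\ref{LEM:G_mult-NE}), and the threat of Player~6 in $\Gpoly(k)$ is what enforces $\norm{x}_1^2+q_k(x)\geq 1$ (Corollary~\ref{COR:G_poly-NE}); if these players are indifferent throughout $\GS$, neither constraint is imposed and NE payoffs no longer encode solutions of $\calS$. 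The correct move --- implicit in the paper --- is to let Players~4 and~5 keep their $\GS$ safety sets on the $\GS$ branch and take the gadget's safety sets only on the gadget's own (disjoint) nodes; since stationary strategies and expected payoffs decompose over the disjoint branches below the chance node, the gadget's no-stationary-NE property conditioned on reaching it is unaffected by the players' payoffs elsewhere. Only the safe sets of Players~$1$, $2$ and~$3$ should exclude the gadget's nodes, so that they receive (the shifted analogue of) payoff~$0$ there, as Lemma~\ref{LEM:GexistsNE} requires.
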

\begin{proof}
  Hansen and Raskin~\cite{GandALF:HansenR2019} constructed a 2-player perfect
  information stay-in-a-set game without any stationary NE. We may use this game
  in place of $\GnoNE$ in the proof of
  Theorem~\ref{THM:RecursiveGamesETR_withoutPayoff}. Namely, consider
  transforming the game $\GexistsNE(\calS)$ by first dividing all rewards by~2
  and then subtracting~1 from all rewards. This does not alter the set of NE of
  the game, but maps all rewards to the interval $[-1,0]$, which may then be
  viewed as a stay-in-a-set game with terminal safety objectives. We may then
  replace $\GnoNE$ by the 2-player stay-in-a-set game of Hansen and Raskin,
  where we let Player~4 and Player~5 take the role of the 2~players and all
  nodes of this game are excluded from the safe sets of Player~1, 2, and~3.
\end{proof}

Let us finally consider mean-payoff games. Ummels and
Wojtczak~\cite{CONCUR:UmmelsW2011} note that non-negative fractional terminal
rewards may in mean-payoff games be simulated with a simple cycle where all
rewards are chosen from the set $\{0,1\}$. Since \GS\ used to prove
Theorem~\ref{THM:RecursiveGamesETR} and
\ref{THM:RecursiveGamesETR_withoutChances} only has non-negative fractional
terminal rewards, so with the \cETR-membership result above we thus obtain
analogous results to Theorem~\ref{THM:RecursiveGamesETR} and
\ref{THM:RecursiveGamesETR_withoutChances} for mean-payoff games where all
rewards are~0 or~1.

\begin{theorem}
  It is \cETR-complete to decide whether a given $m$-player perfect information
  mean-payoff game where all rewards are~0 or~1 has a stationary NE that
  satisfies a given payoff demand, even for $m=7$. The same result holds for the
  analogous problem for stationary SPE.
\end{theorem}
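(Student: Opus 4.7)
The plan is to derive this theorem as an essentially immediate consequence of Theorem~\ref{THM:RecursiveGamesETR} together with the already-noted observation that non-negative fractional terminal rewards can be simulated by simple $\{0,1\}$-reward cycles in a mean-payoff game. Concretely, I will handle membership and hardness separately, and only the hardness step requires new construction.

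For \cETR-membership I would invoke the reduction of Ummels and Wojtczak from mean-payoff games to \ETR{} \cite{CONCUR:UmmelsW2011}. That reduction expresses the event ``$\tau$ is a stationary NE whose expected payoff vector meets a given demand $L$'' as an existential first-order sentence over the reals, and it is a deterministic polynomial-time many-one reduction. Combined with the Blum–Shub–Smale characterization of \cETR{} this at once yields \cETR-membership, exactly as discussed in Section~\ref{SEC:BorelMeanPayoff} for the other objective classes.

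For \cETR-hardness I would reduce from the problem solved by Theorem~\ref{THM:RecursiveGamesETR}, i.e.\ existence of a stationary NE meeting payoff demands in a $7$-player recursive game. Take the game $\GS$ from the proof of that theorem. Every terminal $v\in T$ of $\GS$ carries a payoff vector $u(v)\in\mathbb{Q}^7\cap[0,1]^7$; let $D$ be a common denominator for all such coordinates over all terminals. Replace each terminal $v$ with a simple directed cycle $C_v$ of length $D$, owned by any single player (say Player~1) so that no strategic choice is added, and on $C_v$ assign to Player~$i$ a reward of $1$ on exactly $D\cdot u_i(v)$ of the nodes and $0$ on the remaining nodes; all nodes that previously existed in $\GS$ are assigned reward $0$ for every player. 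A play that ``reached terminal $v$'' now enters $C_v$ and loops forever, yielding mean-payoff exactly $u_i(v)$ to Player~$i$. Hence for every stationary strategy profile the expected mean-payoffs in the new game equal the expected terminal rewards in $\GS$, so stationary NE and their payoffs are in bijection, and the same payoff demand $L$ used in Theorem~\ref{THM:RecursiveGamesETR} witnesses the reduction. The number of players remains $7$, and all rewards are in $\{0,1\}$.

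The only mild obstacle is bookkeeping: one must check that introducing the cycles does not accidentally give a deviating player a more profitable strategy once the game is viewed as a mean-payoff game rather than a recursive game. This is immediate because each $C_v$ has no choices (it is a single-player cycle of forced moves) and is reached with the same distribution as the terminal $v$ in $\GS$, so the MDP faced by any deviator in the new game is payoff-equivalent to the one faced in $\GS$, and stationary best replies coincide. The SPE variant follows by the same argument applied subgame by subgame, since the bijection is compatible with taking subgames.
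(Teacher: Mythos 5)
Your proposal matches the paper's argument: the paper likewise obtains membership from the Ummels--Wojtczak reduction of mean-payoff games to \ETR{} and obtains hardness by replacing the non-negative fractional terminal rewards of $\GS$ with simple $\{0,1\}$-reward cycles realizing the same mean-payoff, keeping $7$ players. Your write-up merely makes explicit the cycle-of-length-$D$ construction that the paper delegates to a citation, and the deviation/subgame bookkeeping is handled correctly.
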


\begin{theorem}
  It is \cETR-complete to decide whether a given $m$-player deterministic
  perfect information mean-payoff game where all rewards are~0 or~1 has a
  stationary NE that satisfies a given payoff demand, even for $m=13$. The same
  result holds for the analogous problem for stationary SPE.
\end{theorem}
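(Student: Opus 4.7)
The plan is to combine Theorem~\ref{THM:RecursiveGamesETR_withoutChances} with the observation, attributed to Ummels and Wojtczak at the end of Section~\ref{SEC:BorelMeanPayoff}, that non-negative fractional terminal rewards can be simulated in a deterministic mean-payoff game by a simple cycle with all rewards in $\{0,1\}$.

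\cETR-membership follows from the polynomial-time reduction of Ummels and Wojtczak from existence of stationary NE meeting payoff demands in mean-payoff games to \ETR, together with the nondeterministic Blum-Shub-Smale machine characterization of \cETR\ recalled at the start of Section~\ref{SEC:BorelMeanPayoff}; the SPE variant is handled by adjoining the standard subgame-perfectness clauses to the formula.

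For hardness, I would start from the 13-player deterministic acyclic recursive game $G$ produced by the reduction of Theorem~\ref{THM:RecursiveGamesETR_withoutChances}, whose terminals carry non-negative rational reward vectors in $[0,1]^{13}$. For each terminal $t$ with reward vector $r=(r_1,\dots,r_{13})$, I would delete $t$ and attach a simple directed cycle of length $N$, where $N$ is a common denominator of $r_1,\dots,r_{13}$. Each node of the cycle has a unique outgoing arc to the next node of the cycle, and the per-step $\{0,1\}$-reward of player $i$ at the $N$ nodes is chosen so that exactly $Nr_i$ of them are $1$; the mean-payoff along the cycle is then precisely $r_i$ for player $i$. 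Since cycle nodes carry no strategic choice, ownership can be assigned arbitrarily to a fixed player, and the resulting game $G'$ is deterministic with all rewards in $\{0,1\}$.

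Because $G$ is acyclic, every play in $G'$ eventually enters and stays forever on exactly one such replacement cycle, so for any stationary strategy profile $\tau$, the expected mean-payoff of every player in $G'$ equals the expected terminal reward in $G$. Stationary NE (and SPE) are thus in bijection between $G$ and $G'$ and satisfy the same payoff demands $L$, so \cETR-hardness transfers. The main subtlety is verifying that the construction has polynomial size, which reduces to checking that a common denominator $N$ of the rewards appearing in $\Gvar$, $\Gmul$, $\Gpoly$, and $\Gchance$ is polynomially bounded in the input size; this follows by inspection of these gadgets, assuming as elsewhere in the paper that the input coefficients $a_{ij}^k$ are presented with polynomially-bounded denominators.
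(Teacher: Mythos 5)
Your proposal follows the paper's argument exactly: the paper derives this theorem in a single paragraph by taking the 13-player deterministic acyclic game of Theorem~\ref{THM:RecursiveGamesETR_withoutChances} and invoking the observation of Ummels and Wojtczak that a terminal with non-negative fractional rewards can be simulated by a simple cycle with rewards in $\{0,1\}$, with \cETR-membership coming from their polynomial-time reduction of mean-payoff games to \ETR. Your cycle construction and the absorption argument (every play in the acyclic game eventually enters exactly one replacement cycle, so mean-payoffs coincide with expected terminal rewards and stationary NE/SPE correspond) are precisely the details the paper leaves implicit, and they are correct.

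The one place where care is needed is the size issue you flag at the end, and the assumption you use to dispatch it is not actually granted by the paper as written. The cycle replacing a terminal must have length equal to a common multiple of the denominators of that terminal's rewards \emph{in value}, not merely in bit-length. Most rewards in the construction ($\tfrac{1}{4n^2}$, the $\hat{q}_i$ of \Gchance, $\tfrac{3}{8n}$, etc.) have denominators polynomial in $n$ and $\ell$, but the rewards $\alpha=(1+a_{ij}^k)/2$ and $1-\alpha$ in \Gmul\ inherit the denominators of the $a_{ij}^k$, and the paper's normalization (integer coefficients scaled into $[-1,1]$) only guarantees that these denominators have polynomially many bits --- as numbers they may be exponentially large, which would make your cycles exponentially long. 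This is a gap shared with the paper's own one-line proof rather than an error you introduced; to close it one should note that $\HOMQUAD(\Simplex)$ remains \cETR-hard when restricted to instances whose coefficients admit a common denominator polynomially bounded in value (e.g.\ by tracing the reduction from a $\QUAD$ instance in the standard form $x=y+z$, $x=yz$, $x=1$ with coefficients in $\{-1,0,1\}$ through $\QUAD(\CornerSimplex)$ and the homogenization of Proposition~\ref{PROP:HomQUAD}), after which your assumption holds without loss of generality.
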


\section{Equilibria in Tree Games}
\label{SEC:TreeGames}
Littman~et~al.~\cite{UAI:LittmanRTZ2006} proved it \NP-hard to decide existence
of a NE satisfying given payoff demands for 2-player game trees. The proof
applies both to positional and stationary NE, as well as positional and
stationary SPE. We describe a variation of their proof below for completeness.

Recall that \PARTITION\ is the \NP-complete problem of deciding
whether for given $a \in \ZZ_+^n$ there exist
$S \subseteq \{1,\dots,n\}$ such that $\sum_{i\in S}a_i=K/2$, where
$K=\sum_{i=1}^na_i$. For a rational valued vector $a \in \QQ_+^n$,
which is possible in our constructions, the problem is even strongly
\NP-complete~\cite{CSR:Wojtczak2018}. We define $\Gpart(a)$ to be the
game depicted in Figure~\ref{fig:partition}, where an initial chance
node selects an item $i$ uniformly at random, Player~1 can then choose
to give the item to Player~2 (i.e.\ choose rewards $(0,a_i)$), or pass
the turn to Player~2, who may either give the item to Player~1 (i.e.\
choose rewards $(a_i,0)$ or discard it (i.e. choose rewards $(0,0)$).
\begin{figure}[ht]
  \centering
  \begin{tikzpicture}[shorten >= 1, node distance = 2cm, on grid, minimum size = 0.8cm]
  \node[shape=diamond, draw=black, label=left:$\rightarrow$] (e1) {};
  \node[draw=none] (n1_i) [right=of e1] {\phantom{MM}};
  \node[shape=circle, draw=black, label=above:$1$] (n1_1) [above=1.5cm of n1_i] {$u_1$};
  \node[shape=circle, draw=black, label=above:$2$] (n2_1) [above right=0.5cm and 2cm of n1_1] {$v_1$};  
  \node[draw=none] (g1_1) [above right=0.5cm and 2cm of n2_1] {$(a_1,0)$};
  \node[draw=none] (g1_2) [below right=0.5cm and 2cm of n2_1] {$(0,0)$};
  \node[draw=none] (g1_3) [below right=0.5cm and 2cm of n1_1] {$(0,a_1)$};

  \node[shape=circle, draw=black, label=above:$1$] (n1_n) [below=1.5cm of n1_i] {$u_n$};
  \node[shape=circle, draw=black, label=above:$2$] (n2_n) [above right=0.5cm and 2cm of n1_n] {$v_n$};  
  \node[draw=none] (gn_1) [above right=0.5cm and 2cm of n2_n] {$(a_n,0)$};
  \node[draw=none] (gn_2) [below right=0.5cm and 2cm of n2_n] {$(0,0)$};
  \node[draw=none] (gn_3) [below right=0.5cm and 2cm of n1_n] {$(0,a_n)$};

  \path[->]
     (e1) edge [bend left=10] node [above left] {$\frac{1}{n}$} (n1_1)
          edge [bend right=10] node [below left] {$\frac{1}{n}$} (n1_n)
     (n1_1) edge [bend left=8]  (n2_1)
            edge [bend right=8] (g1_3)
     (n2_1) edge [bend left=8] (g1_1)
            edge [bend right=8] (g1_2)
     (n1_n) edge [bend left=8]  (n2_n)
            edge [bend right=8] (gn_3)
     (n2_n) edge [bend left=8] (gn_1)
            edge [bend right=8] (gn_2)
   ;
   \path[->] (e1) edge (n1_i) [loosely dashed];
   \path (n1_1) edge node [black, opacity=1, sloped] {\dots} (n1_n) [opacity=0];
\end{tikzpicture}

  \caption{The partition game $\Gpart(a)$.}
  \label{fig:partition}
\end{figure}
Consider the subgame at node $u_i$, let $p_i$ be the probability that Player~$i$
gives the item to the other player, $i=1,2$. In a NE we have either
$(p_1,p_2)=(1,0)$ or $p_1=0$ (and $p_2$ arbitrary). Thus the only NE (which are
also positional and subgame perfect) where the total payoff of the players is
$a_i$ are $(p_1,p_2)=(1,0)$ and $(p_1,p_2)=(0,1)$. It follows that NE in the
game $\Gpart(a)$ in which both players receive payoff~$K/2$ are positional and
SPE and they correspond exactly to equal partitions of the integers
$a_1,\dots,a_n$. This gives a reduction showing \NP-hardness of deciding
existence of a NE satisfying payoff demands, even for 2-player games.
\begin{theorem}[Littman~et~al.~\cite{UAI:LittmanRTZ2006}]
  It is \NP-hard to decide whether for a given 2-player tree game there exist a
  NE satisfying given payoff demands. This hold for both positional NE,
  stationary NE, positional SPE and stationary SPE.
\end{theorem}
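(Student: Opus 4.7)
The plan is to reduce from \PARTITION, or rather its rational-input version which is strongly \NP-complete. Given $a = (a_1, \dots, a_n) \in \QQ_+^n$ with $K = \sum_i a_i$, I form the tree game $\Gpart(a)$ depicted in Figure~\ref{fig:partition} and ask whether it admits a NE meeting the payoff demands $(K/2, K/2)$.

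The first step is a local analysis of the subgame rooted at $u_i$. Let $p_i$ denote the probability that Player~1 passes from $u_i$ to $v_i$ (rather than selecting the terminal $(0,a_i)$), and let $q_i$ denote the probability that Player~2 then plays $(a_i,0)$ at $v_i$. Conditioned on reaching $u_i$, the expected payoff is $p_i q_i a_i$ for Player~1 and $(1-p_i)a_i$ for Player~2, so their sum equals $(1 - p_i(1-q_i))\, a_i \leq a_i$, with equality iff $p_i = 0$ or $q_i = 1$. Case analysis of best replies then pins down which profiles are NE realizing the maximum total $a_i$: if $q_i = 1$ then Player~1 strictly prefers to pass, forcing $(p_i,q_i) = (1,1)$ with payoffs $(a_i, 0)$; if $p_i = 0$ then Player~2 is indifferent and Player~1's best reply requires $q_i = 0$, giving payoffs $(0, a_i)$. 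Any other NE at $u_i$ yields total strictly less than $a_i$.

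Next, since the initial chance node picks $i$ uniformly, the total expected payoff in any NE equals $\frac{1}{n} \sum_i (1 - p_i(1-q_i))\, a_i \leq K/n$. A NE meeting $(K/2, K/2)$ therefore saturates this bound at every $i$, so each item is allocated wholesale to one player, and the set $S = \{i : (p_i, q_i) = (0,0)\}$ of items assigned to Player~2 must satisfy $\sum_{i \in S} a_i = K/2$. Conversely, any equal partition $S$ induces a positional SPE that realizes the demands by setting $(p_i, q_i) = (0, 0)$ for $i \in S$ and $(1, 1)$ otherwise. This establishes the reduction.

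The main (minor) obstacle is ensuring the subgame analysis rules out mixed profiles that leak total payoff; once done, the same construction simultaneously witnesses \NP-hardness for positional NE, stationary NE, positional SPE, and stationary SPE, because the game is a tree in which each non-terminal node is reached by a unique finite history, so stationarity places no effective restriction, and the NE exhibited are in fact pure and subgame perfect.
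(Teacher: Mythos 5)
Your reduction is correct and essentially identical to the paper's: the same game $\Gpart(a)$, the same local best-reply analysis showing that the only equilibria of the subgame at $u_i$ achieving total payoff $a_i$ are the two wholesale allocations of item~$i$, and the same conclusion that equilibria meeting a symmetric payoff demand correspond exactly to equal partitions, with the witnessing equilibrium being simultaneously positional, stationary, and subgame perfect. The only slip (which you share with the paper's own writeup) is the normalization of the initial chance node: since item $i$ is drawn with probability $\tfrac{1}{n}$, the demand that is actually achievable is $(K/(2n),K/(2n))$ rather than $(K/2,K/2)$; this is cosmetic and does not affect the reduction.
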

It is not difficult to prove \NP-membership, and with it strongly \NP-completeness
due to \cite{CSR:Wojtczak2018}, for existence of NE (SPE) for 2-player tree
games or for existence of \emph{positional} NE (SPE) for $m$-player tree games,
that satisfies given payoff demands. Proving \NP-membership for existence of
stationary NE (SPE) in $m$-player tree games satisfying given payoff demands is
to our best knowledge yet an open problem.

Using the chance node gadget of Ummels and Wojtczak of Lemma~\ref{LEM:Gchance}
gives strongly \NP-hardness for deterministic tree games.
\begin{corollary}
  It is strongly \NP-hard to decide whether for a given 5-player deterministic
  tree game there exists a stationary NE satisfying given payoff demands. This
  holds also for positional SPE.
\end{corollary}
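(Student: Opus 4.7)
The plan is to combine the two-player partition game $\Gpart(a)$ from the preceding theorem with the chance-node gadget $\Gchance(p)$ of Lemma~\ref{LEM:Gchance} so as to eliminate the single chance node of $\Gpart(a)$ at the cost of introducing three extra players. Concretely, I would replace the uniform root chance node of $\Gpart(a)$ with $\Gchance(p)$ where $p_i = 1/(n+1)$ for each $i$, identifying the gadget's distinguished sinks $u_i$ with the roots of the $n$ partition subtrees; the remaining $1/(n+1)$ probability of $\Gchance$ is absorbed by its terminal $\bot$. The original partition players become Players~1 and~2 of the combined 5-player deterministic tree game, while the three players from $\Gchance$ become Players~3, 4, 5. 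To match the rewards required by Lemma~\ref{LEM:Gchance}, I assign gadget-player rewards $(1,1,0)$ at every terminal of every partition subtree, rewards $(1,0,1)$ at $\bot$, and the rewards of Figure~\ref{fig:Gchance} at the gadget's threat terminals; Players~1 and~2 get reward~$0$ at all terminals outside the partition subtrees. The payoff demand is
\begin{equation*}
  L = \bigl(\tfrac{K}{2(n+1)},\ \tfrac{K}{2(n+1)},\ 1,\ 0,\ 0\bigr).
\end{equation*}

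The correctness argument mirrors the Littman et al.\ reduction. Given a stationary NE meeting $L$, Player~3 has payoff~$1$, so by Lemma~\ref{LEM:Gchance} terminal $u_i$ is reached with probability exactly $1/(n+1)$. The restriction of the profile to each partition subtree is then a NE of that subtree, and the demands on Players~1 and~2 force, by the analysis of $\Gpart(a)$, that every item be allocated and split exactly evenly, yielding an equal partition. Conversely, an equal partition $S$ determines a stationary profile on each partition subtree; combined with the prescribed $q_i$-randomization of Player~3 inside $\Gchance$ it is straightforward to verify the conditions for a stationary NE meeting $L$. Since the rational coefficients in $a$ are preserved by the reduction and the base hardness is strong by~\cite{CSR:Wojtczak2018}, strong \NP-hardness follows.

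The main technical obstacle is ensuring that Lemma~\ref{LEM:Gchance} continues to apply \emph{inside} the combined game. The lemma is stated for the isolated gadget with $u_i$ changed into terminals of reward $(1,1,0)$; what I rely on is that, conditional on reaching any $u_i$, the gadget players' payoffs are independent of what Players~1 and~2 subsequently do, so that the threats of Players~4 and~5 cannot be made profitable by unexpected contributions from the partition subtrees. Making the gadget-player rewards the constant vector $(1,1,0)$ on every terminal of every partition subtree is precisely what decouples the analyses, so the lemma carries over verbatim. For the positional SPE claim the same construction does not directly work, since $\Gchance$ inherently requires Player~3 to randomize with probability $q_i$; I would handle that case by exploiting that the Littman et al.\ NEs at each $u_i$ are already positional SPE of the partition subgame, and by replacing $\Gchance$ with a dedicated positional-friendly selection gadget, so that the equal-partition equivalence is reproduced for pure stationary strategies as well.
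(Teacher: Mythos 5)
Your construction for the stationary-NE claim is exactly the paper's intended one: graft $\Gchance(p)$ onto $\Gpart(a)$ in place of its root chance node, adding the three gadget players to get a $5$-player deterministic tree game. Your adjustment to $p_i=1/(n+1)$ (needed because the gadget requires $\norm{p}_1<1$), the constant gadget-player rewards $(1,1,0)$ on every terminal of every partition subtree so that Lemma~\ref{LEM:Gchance} decouples from the play of Players~1 and~2, and the rescaled demand $K/(2(n+1))$ for each of Players~1 and~2 together reproduce the Littman~et~al.\ equivalence with equal partitions, and strong \NP-hardness is inherited from \cite{CSR:Wojtczak2018}. That half is correct and complete.

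The gap is the second sentence of the statement. You correctly observe that $\Gchance(p)$ forces its controlling player to randomize with probabilities $q_i\in(0,1)$, so no positional profile can realize the intended distribution; in fact a backward-induction analysis of the gadget shows that in every positional SPE one of the two threat players deviates already at $s_n$, the partition subtrees are never reached, and the controlling player's payoff is $0$, so the reduction as built is vacuous for positional SPE. Your proposed repair --- replacing $\Gchance$ by an unspecified ``positional-friendly selection gadget'' --- is not a proof: no such gadget is exhibited, and it is unclear one can exist, since under pure stationary strategies a deterministic tree game induces only degenerate distributions over the item subtrees. So the positional-SPE half of the corollary remains unproved in your write-up. (The paper's own one-sentence justification does not address this difficulty either; the claim that does follow directly from your construction is the one for \emph{stationary} SPE, since the equilibrium you build is easily checked to be subgame perfect.)
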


\section{Conclusion}
In this paper we have focused on the complexity of decision problems concerning
stationary NE in perfect information stochastic games, and we have obtained the
first \cETR-completeness result for such games, even for \emph{acyclic} games.
While existence of NE with payoff constraints is undecidable for general games,
as shown by Ummels and Wojtczak~\cite{LMCS:UmmelsW2011}, it is decidable for
acyclic games. Indeed, for acyclic games, (general) NE correspond to stationary
NE in the unfolding of the game into a tree game. We consider it an interesting
problem to determine the precise complexity of existence of (general) NE meeting
given payoff demands for acyclic games.

\bibliography{SSMGComplexity}

\end{document}